\documentclass[aps,prl,twocolumn,amsmath,amssymb,gallery,footinbib,superscriptaddress,floatfix]{revtex4-2}
\usepackage{graphicx}
\usepackage{dcolumn}
\usepackage{amsthm}
\usepackage{bm}
\usepackage{hyperref}
\usepackage{tikz}
\usepackage{stackrel,amssymb}
\hypersetup{colorlinks=true, linkcolor=blue, citecolor=blue, urlcolor=magenta, pdftitle={On the search of displacive chiral phase transitions}, pdfauthor={F. Gomez-Ortiz and E. Bousquet}}
\usepackage{fixltx2e}
\usepackage{bbding}
\usepackage{xcolor} 
\usepackage[normalem]{ulem}
\newtheorem{theorem}{Theorem}

\usepackage{tabularx}
\usepackage{multirow}
\usepackage{colortbl}
\usepackage{soul}
\usepackage{cancel}
\usepackage{textcomp}
\usepackage{gensymb}

\begin{document}
\preprint{APS/123-QED}
\title{Chirality Cannot Be Ferroic in Enantiomorphic Space-Groups}
\author{F. G\'omez-Ortiz}
\email[Corresponding author: ]{fgomez@uliege.be}
\affiliation{Physique Th\'eorique des Mat\'eriaux, Q-MAT, Universit\'e de Li\`ege, B-4000 Sart-Tilman, Belgium} 
\author{S. Mamoudou Taganga}
\affiliation{Physique Th\'eorique des Mat\'eriaux, Q-MAT, Universit\'e de Li\`ege, B-4000 Sart-Tilman, Belgium}
\author{Emma E. McCabe}
\affiliation{Department of Physics, Durham University, South Road, Durham, DH1 3LE, U. K.}
\author{A. H. Romero}
\affiliation{Department of Physics and Astronomy, West Virginia University, Morgantown, WV 26505-6315, USA}
\author{E. Bousquet}
\email[Corresponding author: ]{eric.bousquet@uliege.be}
\affiliation{Physique Th\'eorique des Mat\'eriaux, Q-MAT, Universit\'e de Li\`ege, B-4000 Sart-Tilman, Belgium} 
\date{\today}
\begin{abstract}
With growing interest in structural chirality in periodic solids, it has been suggested that chirality might constitute a new ferroic order parameter. In this work, we demonstrate, by means of a formal group-theoretical proof and a systematic group–subgroup analysis, that achiral-to-chiral transitions that produce either member of an enantiomorphic pair cannot be driven by a Brillouin-zone-center ($\Gamma$-point) instability from a common achiral parent. We further substantiate this result by explicitly showing that none of the achiral parent space groups that admit symmetry-chiral phonon eigenvectors host them at the zone center. Given that a primary ferroic order parameter must, among other requirements, transform according to a zone-center irreducible representation, we conclude that phase transitions leading to enantiomorphic space groups cannot be classified as primary ferroic transitions. This predicts that any critical enhancement of fluctuations must occur at finite-$q$ rather than as a divergence of a uniform macroscopic susceptibility.
\end{abstract}
\maketitle

The study of chirality in crystalline solids has gained renewed momentum in recent years. This resurgence has been driven in part by the identification of chiral phonons~\cite{Zhu-18,Juraschek-25}, as well as by the growing recognition that chirality plays a significant role in several high-impact areas of condensed-matter physics, including topological insulators~\cite{Chang-18,Sanchez-19}, magnetic skyrmions~\cite{Muhlbauer-09}, and unconventional transport phenomena in noncentrosymmetric materials~\cite{Juraschek-19,Li-20}. Beyond being a geometric descriptor, chirality has emerged as a structural feature capable of shaping electronic, vibrational, and magnetic properties.

Recent reviews~\cite{Felser-22,Bousquet-25} have emphasized chirality as an organizing principle~\cite{Hlinka-14} for emergent behavior in solids. 
In particular, chirality has been discussed in the context of spontaneous symmetry breaking. It has been proposed that structural transitions from achiral to chiral phases may arise from the condensation of soft phonon modes~\cite{Bousquet-25,Fava-25, GomezOrtiz-25}. 
Within such a framework, chirality can be treated as an order parameter, and Landau free-energy expansions have been proposed to describe the energy landscape connecting left- and right-handed configurations~\cite{Bousquet-25,Spaldin-26}. 
Motivated by these developments, it has also been hypothesized that chirality itself might be treated in close analogy with conventional ferroic order parameters~\cite{Fava-25}. 
In particular, this question has attracted special attention in the case of the 22 enantiomorphic space groups.
With this perspective, the two enantiomorphic variants are viewed as symmetry-related states separated by a double-well potential~\cite{Fava-25, Bousquet-25}. 
This formulation naturally prompts the search for a \textit{universal} conjugate field that would couple linearly to chirality~\cite{Bousquet-25, fava2025prl}, allowing selective stabilization of a chosen enantiomorph and opening the highly desirable scenario of externally controllable “ferrochiral” behavior~\cite{Bousquet-25,Spaldin-26}.

Following the phenomenological definition of ferroicity~\cite{wadhawan2000introduction, toledano2015, Toledano-87}, for chirality to qualify as a primary ferroic order, it needs to satisfy four criteria~\cite{wadhawan2000introduction}: (i) the spontaneous emergence of long-range order below a critical temperature (T$_\text{C}$), (ii) the existence of enantiomorphic domains that exhibit hysteresis when switched, (iii) the ability to undergo field-induced phase transitions, and (iv) a characteristic enhancement of macroscopic susceptibility near (T$_\text{C}$).
In the framework of Landau theory, this further indicates that the primary ferroic transition has to take place at the center of the Brillouin zone ($k=0$), ensuring that the order parameter is macroscopic and preserves the unit cell size~\cite{toledano2015, Toledano-87, Toledano-77}.

This primary ferroicity has been questioned regarding the gyrotropic properties, e.g., for gyroelectric crystals, where switching the ferroelectric order switches the natural optical activity of the crystal.
Optical gyrotropy has then been viewed as an implicit form of ferroicity (often acting as a by-product of ferroelectricity or ferroelasticity)~\cite{wadhawan1979}; classifying it as an independent primary ferroic order would, however,  require the identification of an unique ordering field that can directly couple to the chiral order parameter~\cite{wadhawan2000introduction}.
A similar debate has recently arisen in the context of the magnetic toroidal moment~\cite{sannikov2007, vanaken2007, zimmermann2014}, where the question of primary ferroicity was carefully examined and ultimately resolved with the recognition of ferrotoroidicity as a genuine ferroic order~\cite{toledano2015, gnewuch2019}.
In this sense, the present discussion regarding chirality follows an established line of inquiry. 
The central symmetry question is: can a single achiral parent phase generate both enantiomorphs of an enantiomorphic pair through a $\Gamma$-point order parameter?

In this work, we address this question at the level of group theory and group-subgroup relationships. 
We demonstrate, in a general way, that achiral-to-chiral phase transitions among the 22 enantiomorphic space groups cannot arise from $\Gamma$-point irreducible representations, generalizing a result previously noted for chiral spinels~\cite{Talanov-15,Talanov-16}. 
Instead, such transitions necessarily originate from instabilities at non-zero wave vectors. 
We further corroborate this result by examining the symmetry properties of chiral phonon modes and showing that none of the achiral parent space groups hosting chiral phonons possess them at the zone center.
Our result implies that structural chirality leading to enantiomorphic space groups cannot be classified as a primary ferroic order. 
We also discuss in the conclusion the case of the 43 Sohnke groups that hold chiral crystals, but where, contrary to the enantiomorphic space groups, the left- and right handed structures belong to the same space group. 
These results specify the circumstances in which drawing an analogy to ferroic behavior is, or is not, justified.

\emph{---Zone-Center Exclusion Theorem in chiral transitions to Enantiomorphic pairs---}

\begin{theorem}\label{Theorem}
    Let $L_1$ and $L_2$ be an enantiomorphic pair. Then, any space group $H$ that is a common supergroup of both enantiomorphs must satisfy $[T_H:T]>1$.
\end{theorem}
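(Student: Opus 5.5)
The plan is to argue by contradiction. Suppose $H$ is a common supergroup of $L_1$ and $L_2$ with $[T_H:T]=1$, i.e.\ $T_H=T$, where $T$ is the translation lattice carried by the enantiomorphs inside $H$. Then $H$ is a \emph{translationengleiche} supergroup, and I will derive the contradiction from two facts: $H/T$ is a point group, and every linear part in $H$ lifts to a \emph{unique} coset of $T$.

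The first step is a uniqueness lemma. If $T_H=T$, then for every $R\in P_H$ there is exactly one coset $(R|\tau_R)+T$ in $H$. Indeed, if $(R|\tau)$ and $(R|\tau')$ both lie in $H$, their quotient $(1|\tau-\tau')$ is a translation of $H$, hence lies in $T$. Consequently any subgroup $L\subset H$ with $T_L=T$ is completely determined by its point group $P_L\subset P_H$: it is $L=\{(R|\tau_R+t):R\in P_L,\ t\in T\}$. In particular, the intrinsic (screw) translation attached to each rotation axis of $H$ is fixed once and for all by $H$.

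The second step uses the fact that the enantiomorphs differ by a screw sense. Every one of the 22 enantiomorphic pairs ($P3_1/P3_2$, $P4_1/P4_3$, $P6_1/P6_5$, $P6_2/P6_4$, $P4_132/P4_332$, and so on) is distinguished by a screw operation $(R|\tau)$ whose intrinsic translation along the axis $u$ is $\tau_\parallel=(p/n)\,c$, with $p\neq 0,n/2$. The mirror partner contains instead the opposite screw sense relative to the rotation sense. Since $L_1,L_2\subset H$ and $H$ is achiral, $H$ contains an improper element $g=(S|s)$. Conjugating $(R|\tau)\in L_1$ by $g$ gives an element $(SRS^{-1}|\tau'')\in H$. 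The key computation is that an improper $S$ reverses the relation between rotation sense and axial translation, and this can be checked directly from $S=-\tilde S$ with $\tilde S$ proper:
\begin{itemize}
\item if $SRS^{-1}=R$ (for instance a mirror perpendicular to $u$), the axial translation becomes $-\tau_\parallel$;
\item if $SRS^{-1}=R^{-1}$ (a mirror containing $u$), it stays $+\tau_\parallel$, while $(R|\tau)^{-1}$ already carries $-\tau_\parallel$.
\end{itemize}
In either case $H$ contains two elements with the same linear part whose intrinsic translations differ by $2\tau_\parallel=(2p/n)c\notin T$, because $p\neq 0,n/2$. If $SRS^{-1}$ is another axis of $P_H$ not of these forms, I instead compare it with the element of $L_1$ or $L_2$ on that axis. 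This is available because the two enantiomorphs share the lattice, and the chirality-defining axes of a given enantiomorphic type are fixed up to the proper symmetries of the lattice.

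By the uniqueness lemma, two such elements force the translation $(1|2\tau_\parallel)$ into $T_H=T$, which is a contradiction. Hence $T_H\supsetneq T$, i.e.\ $[T_H:T]>1$.

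The main obstacle is making the second step uniform over all 22 pairs rather than checking them one by one. The delicate point is the claim that the chirality-distinguishing screw of $L_1$, conjugated by an improper element of $H$, always lands on a linear part for which $H$ already contains a lift with the opposite screw, namely a lift coming from $L_1$ itself or from $L_2$. I would handle this by reducing to the cyclic subgroup generated by the principal screw axis, which lies in the (common) point group; the cubic pairs reduce to their $4_1/4_3$ axes. If a fully uniform argument resists, I would fall back on verifying this conjugation property directly from the tabulated screw axes of each pair.
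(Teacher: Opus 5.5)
\textbf{Gap: your proof only covers achiral $H$.} Your second step rests on the sentence ``since $L_1,L_2\subset H$ and $H$ is achiral, $H$ contains an improper element.'' The statement, however, quantifies over \emph{every} common supergroup $H$, and nothing in the hypotheses forces $H$ to be achiral. Sohncke common supergroups do exist. For instance, $P4_2$ contains both $(C_4|\tfrac12 c)$ and $(C_4|\tfrac32 c)$. Together with $a,b,2c$, these generate $P4_1$ and $P4_3$ on the same doubled lattice; this is the $Z_3/Z_4$ example the paper itself discusses.

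For a Sohncke $H$ there is no improper $g$ to conjugate with, so your argument never excludes $T_H=T$ in that case. Your two bullets also never use $L_2$. What they actually establish is that a single enantiomorph cannot be a translationengleiche subgroup of an \emph{achiral} group. That is correct and gives the theorem for achiral $H$, but it is not the full statement.

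\textbf{The fix is already in your uniqueness lemma.} It is exactly the paper's proof. If $T_H=T$, the linear part $C_n$ has a unique lift to $H$ modulo $T$. Yet $L_1$ supplies $(C_n|\tfrac{p}{n}c)$ and $L_2$ supplies $(C_n|\tfrac{n-p}{n}c)$. Hence $n_{n-p}n_p^{-1}=(E|\tfrac{n-2p}{n}c)$ is a non-lattice translation lying in $H$. No improper element is needed, so the chirality of $H$ plays no role.

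One alignment detail remains. If the $n$-fold axes of $L_1$ and $L_2$ are not parallel inside a cubic $H$, first conjugate one screw by a proper element of $H$ that carries its axis onto the other. Such an element exists because the proper subgroup of a cubic point group permutes its $3$-fold (and its $4$-fold) axes transitively, and proper conjugation preserves screw sense. The paper takes this alignment for granted.

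\textbf{Your third case is avoidable.} As written it is only a heuristic, but within the achiral case you never need it. Every achiral crystallographic point group containing a rotation of order $\ge 3$ also contains an improper element normalizing that rotation:
\begin{itemize}
\item the inversion, in the centrosymmetric groups;
\item a mirror perpendicular to the axis, in $\bar 6$ and $\bar 6m2$;
\item a mirror containing the axis, in $4mm$, $3m$, $6mm$ and $\bar 43m$.
\end{itemize}
So one of your two bullets always applies, and the step you flagged as the main obstacle is not the real problem. The real problem is the unjustified achirality assumption.
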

\begin{proof}
Since $L_1$ and $L_2$ are enantiomorphic space groups, they share the same point group and translational subgroup $T$.

Without loss of generality ($L_1$ and $L_2$ space groups each include a screw axis $n_p$ of opposite helicity), we may assume that the screw operation
\[
n_p = (C_n \mid \tfrac{p}{n}c) \in L_1,
\]
for some integer $p < \tfrac{n}{2}$, while its enantiomorphic counterpart
\[
n_{n-p} = (C_n \mid \tfrac{n-p}{n}c) \in L_2.
\]

Since both $L_1$ and $L_2$ are subgroups of $H$, it follows that
\[
n_{n-p} n_p^{-1} \in H.
\]

A direct computation yields
\begin{align}
n_{n-p} n_p^{-1}
&= (C_n \mid \tfrac{n-p}{n}c)\,(C_n^{-1} \mid -\tfrac{p}{n}c) \nonumber \\
&= (E \mid \tfrac{n-2p}{n}c) \in T_H\trianglelefteq H.
\end{align}

Because $p < \tfrac{n}{2}$, the quantity $\tfrac{n-2p}{n}c$ corresponds to a non-lattice fractional translation of $L_1$ and $L_2$ that must lie in the translational subgroup $T_H$ of $H$, implying that
\[
[T_H : T] > 1.
\]
\end{proof}

It immediately follows from the theorem that transitions from any parent structure to chiral enantiomorphic pairs cannot occur at the $\Gamma$ point.
Indeed, a $\Gamma$-point instability preserves the translational subgroup of the supergroup $H$, whereas the theorem shows that for an enantiomorphic transition $[T_H:T]>1$. Geometrically, this condition means that the enantiomorphic phases have a larger primitive unit cell than of the parent (of space group $H$). 
Consequently, transitions from achiral to chiral enantiomorphic pairs must involve finite-wavevector instabilities.
We also clearly see that the instability is always commensurate. 
The translational index obtained from the group–subgroup relation fixes the required multiplicity of the supercell according to the theorem 
$(n-2p)/n$. 
For example, transitions towards $P6_2~\&~P6_4$ will involve a $k$-index of 3, [$(6-2\cdot 2)/6=1/3$], whereas transitions towards $P4_1~\&~P4_3$ will involve a $k$-index of 2, [$(4-2\cdot 1)/4=1/2$] as illustrated in Figure.~\ref{fig:sketch}. 
\begin{figure}[h]
     \centering
      \includegraphics[width=\columnwidth]{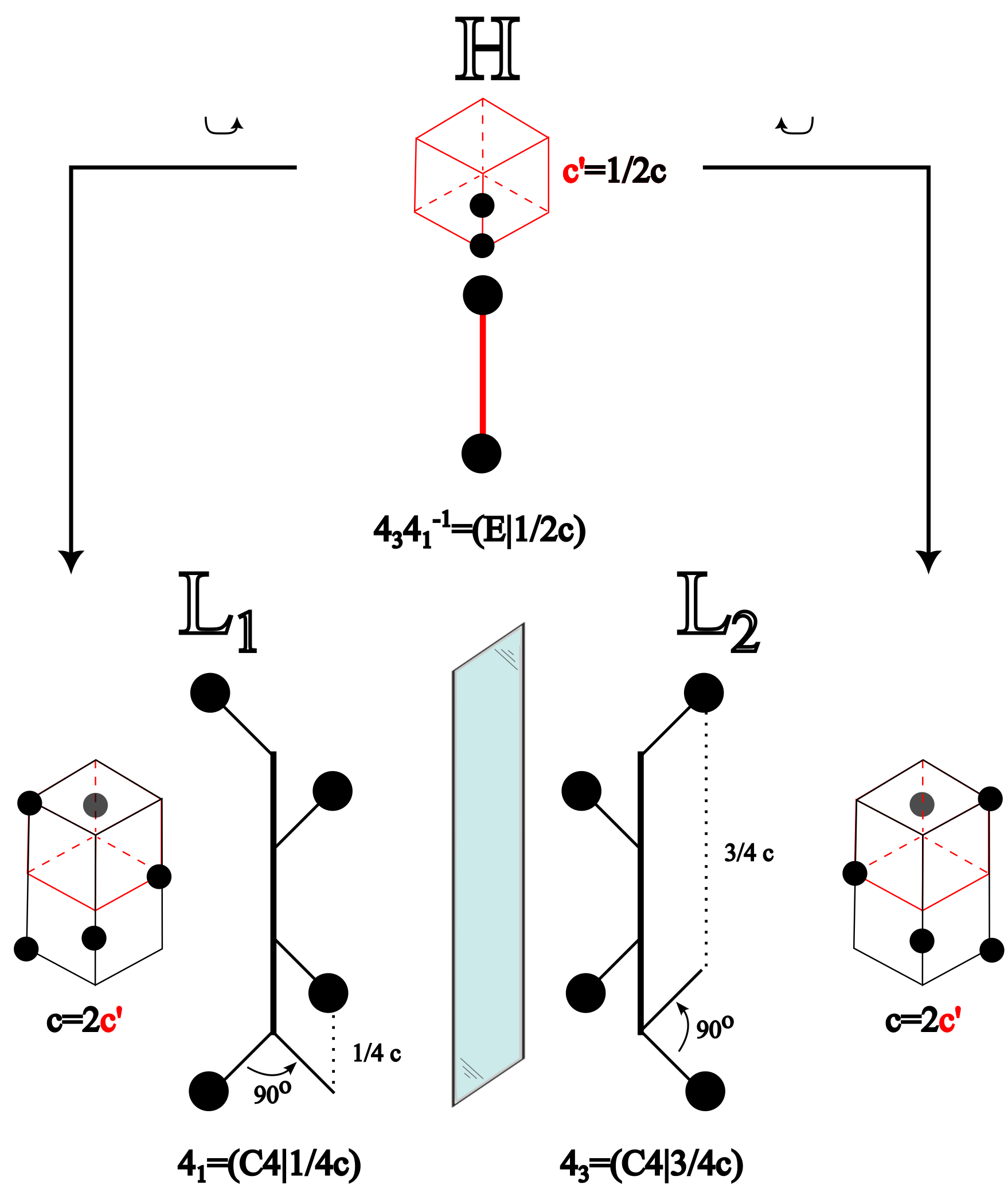}
      \caption{Illustrative example of the theorem applied to the enantiomorphic screw axis $P4_1$ and $P4_3$.} 
      \label{fig:sketch} 
\end{figure}
The participating soft modes must, therefore, condense at a commensurate finite wave vector. The specific wave vector depends on the space-group pair and is detailed in the Supplemental Material~\cite{supplemental_material}. For example, for the  P6$_2$/P6$_4$ and P4$_1$/P4$_3$ cases the corresponding commensurability is characterized by indices $3$ and $2$ respectively.

We emphasize that this result does not forbid zone-center instabilities in chiral systems more generally. In particular, $\Gamma$-point transitions can occur between chiral space groups. 
For example, in SiO$_2$, a zone-center instability transforms the system from the chiral enantiomorphic space group $P6_422$ to another chiral enantiomorphic space group $P3_121$~\cite{Shapiro-67,Grimm-75,Gomez-Ortiz26}. However, such a $\Gamma$-point transition selects a single handedness and cannot produce the opposite enantiomer ($P3_221$, which cannot be reached through the same order parameter) nor through an achiral parent phase. As a result, such a transition does not constitute a ferroic switching of chirality and lies outside the class of processes considered in our theorem. Indeed, we can see that $6_4=(3_1)^2$ and therefore both space groups share the same translational subgroup.
Similarly, transitions between achiral and non-enantiomorphic chiral space groups can occur at the zone center, as in Pb$_5$Ge$_3$O$_{11}$~\cite{Fava-24}, Ba(TiO)Cu$_4$(PO$_4$)$_4$~\cite{Hayashida-21}, or AgNbO$_3$~\cite{Song-26,Safari-26}, where $\Gamma$-point instabilities drive a transition from $P\bar{6}$ to $P3$; $P4/nmm$ to $P42_12$, or $Im\bar{3}$ to $R3$ respectively. 
Together, these examples demonstrate that only transitions from a parent phase to both members of an enantiomorphic pair are forbidden at the zone center, in agreement with our theorem; by contrast, transitions to non-enantiomorphic (``potato-like''~\cite{Ruch-77,Bruce-03,Felser-22}) chiral space groups or those selecting a single handedness remain allowed.
In these later transitions to nonenantiomorphic Sohnke groups, the structure is not necessarily chiral; rather the space group does not contain improper symmetry operations (and so a chiral object cannot be wrapped onto its enantiomorph).
This is consistent with the possibility of a transition involving a zone-center mode which simply breaks symmetry elements.

\emph{--- Confirmation via Space-Group and Chiral-Phonon Analysis ---}
To provide empirical support for the theorem, we analyzed the space-group relations between parent supergroups and their enantiomorphic pair using the utilities available in the Bilbao Crystallographic Server~\cite{Aroyo-06,Aroyo-06.2} and the ISOTROPY software suit~\cite{Hatch-03}. 
A systematic search for common supergroups of each enantiomorphic pair confirms that all common supergroups of the enantiomorphic pair have a $k$-index larger than 1, meaning that at least a cell doubling is at play. The complete enumeration is provided in Tab.~\ref{tab:enantiomorphic_kindex}. 
\begin{table}[hbp]
\centering
\begin{tabular}{cc}
\hline
\textbf{Enantiomorphic pair} & \textbf{$k$-index} \\
\hline
P$3_1$/P$3_2$ & 3 \\
P$4_1$/P$4_3$ & 2 \\
P$6_1$/P$6_5$ & 3 \\
P$6_2$/P$6_4$ & 3 \\
P$4_122$/P$4_322$ & 2 \\
P$4_12_12$/P$4_32_12$ & 2 \\
P$6_122$/P$6_522$ & 3 \\
P$6_222$/P$6_422$ & 3 \\
P$3_112$/P$3_212$ & 3 \\
P$3_121$/P$3_221$ & 3 \\
P$4_132$/P$4_332$ & 2 \\
\hline
\end{tabular}
\caption{Enantiomorphic space-group pairs and the minimum $k$-index of any common parent supergroup.}
\label{tab:enantiomorphic_kindex}
\end{table}

Additionally, We examined all achiral space groups capable of hosting a chiral phonon whose irreducible representation can yield an enantiomorphic space group. 
These possibilities are listed in Tables SI to SVII of the Supplementary Information~\cite{supplemental_material}. 
Our exhaustive analysis reveals that, in all symmetry-allowed cases, chiral phonon modes occur exclusively away from the $\Gamma$ point. 
This confirms that zone-center instabilities for enantiomorphic space groups do not exist, thereby providing explicit validation of Theorem 1.

We further observe that trigonal non-enantiomorphic space groups always require a $k$-index $\geq 3$, whereas tetragonal groups require a $k$-index $\geq 2$. 
Finally, our analysis shows that hexagonal and rhombohedral achiral space groups never exhibit a chiral irreducible representation at a strict zone boundary point—unlike tetragonal and cubic cases, where such zone boundary points are present.
Instead, chiral to enantiomorphic phase transitions in these systems are driven by non-zero wave vectors between $\Gamma$ and zone boundary points. An example of such a transition is provided by the CsCuCl$_3$ compound~\cite{Schlueter1966,Hirotsu1975}, where a $\delta_6$ mode with $q=(0,0,\frac{1}{3})$ drives the system from the $P6_3/mmc$ parent phase to the enantiomorphic pair $P6_122/P6_522$.

\emph{--- Physical implications ---}
The exclusion of $\Gamma$-point instabilities for transitions towards enantiomorphic-pairs directly impacts the proposed analogy between chirality and conventional ferroic order parameters.
In proper ferroics, the order parameter transforms according to a zone-center irreducible representation, which permits linear coupling to a homogeneous external (conjugate) field capable of inducing its reversal.
Therefore, an immediate implication is that no universal homogeneous conjugate field can couple linearly to the primary chiral order parameter for enantiomorphic-pair transitions, since the primary instability is necessarily finite-$q$.
A central consequence of our result is thus the absence of a universal conjugate field for chirality switching, which is instead intrinsically material-specific and must be treated on a case by case basis.
This perspective is further supported by the case of non-enantiomorphic space groups, where chirality can arise from a $\Gamma$-point distortion. 
Even in such situations, chirality typically emerges from the coupling between an axial and a polar mode~\cite{Hayashida-21,Fava-24} suggesting that chirality is inherently second-order and its reversal depends on the specific microscopic modes involved. 
A related implication concerns the nature of the critical behavior. 
Since the soft mode condenses at a finite wave vector, critical fluctuations occur at nonzero $k$. 
Consequently, one should not expect a divergence of any uniform ($q=0$) susceptibility associated with chirality. Instead, the relevant critical response is the generalized susceptibility at the ordering wave vector $q*$~\cite{Scott-12}, and can be observed via supercell peaks or diffuse scattering for that particular $q$-vector.

Beyond these general considerations, the structure of the possible irreducible representations collected in Supplementary Table SI to SXVII~\cite{supplemental_material} provides additional insight into the nature of the order parameter.
The first interesting observation is the multidimensional character of the participating irreducible representations. The resulting multiplicity of symmetry-allowed domain directions implies that opposite handedness cannot be described as a simple sign reversal of a scalar amplitude (echoing Kelvin's description of chirality in terms of non-sperimposable mirror images~\cite{Bousquet-25}).
Rather, the enantiomorphic states correspond to distinct orientations within a higher-dimensional order-parameter space.
As a consequence, chiral domain walls are not generically ``Ising-like'' objects associated with the inversion of a one-component order parameter. 
Instead, they involve reorientation within a multidimensional manifold, potentially allowing for continuous textures or phase rotations across the wall. 
This symmetry structure, beyond conventional Ising-type interfaces, provides a natural setting for experimental investigations using real-space imaging and scattering probes, and suggests the possibility of observing topological defects beyond conventional domain walls classified by $\pi_0(\mathcal{X})$, including defects associated with nontrivial higher homotopy groups of the order-parameter space, such as vortices.

The second, even more surprising situation arises when the common parent phase belongs to a Sohncke space group that is non-enantiomorphic (i.e., it lacks improper rotation symmetry but is not inherently chiral). 
For those Sohncke groups, we have identified explicit examples in which opposite handedness can be linked to the condensation of two distinct zone boundary irreducible representations (e.g. $P4_2$ space group where $Z_3$ and $Z_4$ give the $P4_1$ and $P4_3$ enantiomorphic space groups, respectively~\cite{GomezOrtiz-25}).
%
Since the two resulting enantiomorphic structures are mirror images of each other, their common parent structure (recovered when the order parameter amplitude vanishes) must itself be invariant under mirror symmetry. 
This means that the parent phase, despite belonging to a chiral Sohncke space group, must contain an achiral structural motif.

Contrary to the enantiomorphic space groups, where enantiomorphs arise from different orientations of the same irreducible representation (i.e. different order parameter directions), here they originate from entirely separate irreducible representation.
To preserve chiral symmetry in the parent achiral phase, these channels must be exactly degenerate, e.g. as reported in Ref.~\cite{GomezOrtiz-25} for Sr$_2$As$_2$O$_7$ in the $P4_2$ space group, where the $Z_3$ and $Z_4$ unstable phonon modes have the same frequency and give the same gain of energy.
This degeneracy between distinct irreducible representations typically suggests a first-order phase transition as the system must choose between competing  symmetry-distinct instabilities~\cite{Toledano-87}.
Finally, domain walls between oppositely handed regions in this scenario are unique as they will involve the suppression of one instability and the concurrent growth of the other. 
This behavior is qualitatively different from conventional ferroic domain walls and highlights the complex character of structural chirality.

\emph{---Conclusions---}
We have demonstrated that structural transitions from achiral phases to enantiomorphic space-group pairs must exclude Brillouin-zone center instabilities. 
This symmetry-imposed constraint demonstrates that the recently invoked analogy between chirality and primary ferroicity\cite{Bousquet-25,Spaldin-26} cannot hold in the case of enantiomorphic transitions and that intrinsic differences separate the two phenomena.
While the comparison is suggestive (presence of a double well between two opposite states, i.e. left and right handed structures that can be seen as ``up'' and ``down'' states of a ferroic order, and connected by a high symmetry reference structure), at a phenomenological level this analogy does not resist  formal symmetry analysis for enantiomorphic space-group pairs. 
The emergence of handedness in these transitions is therefore intrinsically distinct from the mechanism underlying primary ferroic phase transitions as classically defined ~\cite{wadhawan2000introduction, toledano2015}.

The situation is, however, different for the 43 non-enantiomorphic Sohncke space groups in which zone-center transitions have been reported (see Supporting Information~\cite{supplemental_material}). Observations of chiral phase transitions toward these 43 space groups suggest that chirality typically emerges as a byproduct, arising from the coupling between, for example, axial (respectively antiaxial) and polar (respectively antipolar) orders~\cite{fabini2024,Hayashida-21,Fava-24,Song-26,Safari-26,wadhawan1979,Aramberri-25}. 
This issue has been investigated in the context of gyroelectrics (or ferrogyrotropy), where optical activity was found to be a consequence of other primary order parameters. 
For chirality itself, however, the problem still requires a more detailed investigation~\cite{Bousquet-25}. 
To the best of our knowledge, there is only one reported case in which chirality arises as a primary order parameter in bulk material~\cite{Hayashida-21}, thus constituting the only known example of a proper chiral transition. 
In that case, the mode simultaneously involves both antiaxial and antipolar motions inside the same eigenvector. 
Nevertheless, the unit cell is relatively large, containing several motif units allowing for coupled complex internal distortion patterns even within a zone-center mode. 
Even in the case of non-enantiomorphic space groups, where zone-center distortions are allowed, no unambiguous conjugate field capable of selectively coupling to and switching structural chirality has, to date, been demonstrated. This remains an open question to be scrutinized, despite recent reports of ultrafast light-induced control of chirality~\cite{Romao2024}. 
\acknowledgments
We acknowledge He Xu for his constructive feedback and stimulating discussions that enriched our analysis.
F.G.O. acknowledges financial support from MSCA-PF 101148906 funded by the European Union and the Fonds de la Recherche Scientifique (FNRS) through the FNRS-CR 1.B.227.25F grant. S.M.T.  acknowledges the Fonds de la Recherche Scientifique (FNRS) for financial support with a FRIA doctoral grant.
F.G.-O., S.M.T and E.B. acknowledge the Fonds de la Recherche Scientifique (FNRS) for financial support, the PDR project CHRYSALID No.40003544, the Consortium des \'Equipements de Calcul Intensif (C\'ECI), funded by the Fonds de la Recherche Scientifique (F.R.S.-FNRS) under Grant No. 2.5020.11 and the Tier-1 Lucia supercomputer of the Walloon Region, infrastructure funded by the Walloon Region under the grant agreement n$^\circ$1910247.
\newpage
\onecolumngrid
\section{Supplementary Information for Chirality Cannot Be Ferroic in Enantiomorphic Space-Groups}
\subsection{Chiral phonons from achiral space-groups}
In this section, we systematically enumerate all symmetry-allowed pathways that connect achiral space groups to enantiomorphic-pair families. For each case, we identify the corresponding $q$-point at which the chiral phonon emerges and specify the domain direction associated with the condensation into each enantiomorphic space group.
In some cases, the large number of possible domain directions associated with the irreducible representation is omitted from the table to maintain compactness.
\begin{table}
    \begin{tabular}{c|ccc}
    \hline\hline
     high sym. achiral sg.    & IRREP & domain direction & enantiomorphic sg. \\
     \hline 
     $P4_2/m$ (84)   & $Z_2$ & $(a,0)$ & $P4_1$ (76)  \\
     $P4_2/n$ (86)   & $Z_2$ & $(0,a)$ & $P4_3$ (78)  \\
\hline
\multirow{4}{*}{$I4_1$ (88)} & $X_1$ & (0,a,0,a); (0,a,0,-a) & \multirow{2}{*}{$P4_1$ (76)} \\
            & $M_2$ & (a,0) & 
             \\ \cline{2-4}
           & $X_1$ & (a,0,a,0); (a,0,-a,0) & \multirow{2}{*}{$P4_3$ (78)} \\
            & $M_2$ & (0,a) & \\   
            \hline
    \multirow{2}{*}{$P4_2nm$ (102)} & $Z_5$ & $(a,0)$ & $P4_3$ (78) \\
    & $Z_5$ & $(0,a)$ & $P4_1$ (76) \\
    \hline
    $P4_2mc$ (105)   & $Z_5$ & $(a,0)$ & $P4_1$ (76)  \\
     $P4_2bc$ (106)  & $Z_5$ & $(0,-a)$ & $P4_3$ (78)  \\
     \hline
\multirow{4}{*}{$I4_1md$ (109)} & $X_1$ & (a,0,0,a); (a,0,0,-a) & \multirow{2}{*}{$P4_1$ (76)} \\
            & $M_5$ & (a,0) & 
             \\ \cline{2-4}
           & $X_1$ & (0,a,a,0); (0,-a,a,0) & \multirow{2}{*}{$P4_3$ (78)} \\
            & $M_5$ & (0,a) & \\   
\hline
\multirow{4}{*}{$I4_1cd$ (110)} & $X_1$ & (a,0,0,a); (a,0,0,-a) & \multirow{2}{*}{$P4_3$ (78)} \\
            & $M_5$ & (a,0) & 
             \\ \cline{2-4}
           & $X_1$ & (0,a,a,0); (0,-a,a,0) & \multirow{2}{*}{$P4_1$ (76)} \\
            & $M_5$ & (0,a) & \\   
\hline
                        & $Z_3$ & $(a,0)$ & \multirow{2}{*}{$P4_122$ (91)}  \\
     $P4_2/mmc$ (131)   & $Z_4$ & $(a,0)$ &   \\

     $P4_2/nbc$ (133)   & $Z_3$ & $(0,-a)$ & \multirow{2}{*}{$P4_322$ (95)}  \\
                        & $Z_4$ & $(0,a)$  &   \\
\hline
                        & $Z_3$  & $(a,a)$  & \multirow{2}{*}{$P4_322$ (95)}  \\
     $P4_2/mcm$ (132)   & $Z_4$  & $(-a,a)$ &   \\
     $P4_2/nnm$ (134)   & $Z_3$  & $(-a,a)$  & \multirow{2}{*}{$P4_122$ (91)}    \\
                        & $Z_4$  & $(a,a)$  &    \\
                        \hline
                        & $Z_3$  & $(a,0)$  & \multirow{2}{*}{$P4_12_12$ (92)}  \\
     $P4_2/mbc$ (135)   & $Z_4$  & $(a,0)$ &   \\
     $P4_2/nmc$ (137)   & $Z_3$  & $(0,-a)$  & \multirow{2}{*}{$P4_32_12$ (96)}    \\
                        & $Z_4$  & $(0,a)$  &    \\
\hline
           & $X_1$ & (a,0,a,0); (-a,0,a,0) & \multirow{2}{*}{$P4_122$ (91)} \\
            & $M_3$ & (a,0) & 
             \\ \cline{2-4}
                    & $X_1$ & (0,a,0,a); (0,-a,0,a) & \multirow{2}{*}{$P4_322$ (95)} \\
$I4_1/amd$ (141)    & $M_3$ & (0,-a) & \\  \cline{2-4}
$I4_1/acd$ (142)    & $X_2$ & (a,0,a,0); (-a,0,a,0) & \multirow{2}{*}{$P4_12_12$ (92)} \\
            & $M_4$ & (a,0) & 
             \\ \cline{2-4}
           & $X_2$ & (0,a,0,a); (0,-a,0,a) & \multirow{2}{*}{$P4_32_12$ (96)} \\
            & $M_4$ & (0,a) & \\  
                                        
\hline
    \end{tabular}
    \caption{Achiral tetragonal space groups (first column) that hold chiral phonons with a zone boundary irreducible representation (second column) capable of generating an enantiomorphic subgroup (fourth column). The third column lists the domain direction that is associated with the enantiomorphic subgroup of column four.} 
\end{table}

\begin{table*}
    \begin{tabular}{c|ccc}
    \hline
     high sym. achiral sg.   & IRREP & domain direction & enantiomorphic sg. \\
     \hline 

\hline
    \multirow{3}{*}{$Pm\bar{3}n$ (223)} & \multirow{3}{*}{$X_3$, $X_4$} & (a,0;0,0;0,0) & $P4_122$ (91)/$P4_322$ (95) \\
                                        &  & (a,b;a,-b;c,0) & $P4_12_12$ (92)/$P4_32_12$ (96) \\              
                                        &  & (a,0;a,0;a,0) & $P4_332$ (212)/$P4_132$ (213)  \\

\hline
    \multirow{13}{*}{$Fd\bar{3}m$ (227)} & $X_1$ & (a,a;a,-a;0,0) & $P4_122$ (91)/$P4_322$ (95) \\\cline{2-4}
                                        & $X_2$ & (a,a;a,-a;0,0) & $P4_122$ (91)/$P4_322$ (95) \\\cline{2-4}
                                        & \multirow{3}{*}{$X_3$} & (a,0;0,0;0,0) & $P4_122$ (91)/$P4_322$ (95) \\
                                        &       & (a,0;a,0;b,0) & $P4_12_12$ (92)/$P4_32_12$ (96) \\
                                        &       & (a,0;a,0;a,0) & $P4_332$ (212)/$P4_132$ (213) \\\cline{2-4}
                                        & \multirow{2}{*}{$X_4$} & (a,0;a,0;0,0) & $P4_12_12$ (92)/$P4_32_12$ (96) \\
                                        &       & (a,0;0,b;a,0) & $P4_1$ (76)/$P4_3$ (78) \\ \cline{2-4}
                                        & \multirow{3}{*}{$W_1$} & (a,0,0,0;0,a,0,0;0,0,0,0) & $P4_122$ (91)/$P4_322$ (95)  \\
                                        &       & (a,0,a,0;0,a,0,a;0,0,0,0) & $P4_12_12$ (92)/$P4_32_12$ (96)  \\
                                        &       & (a,b,a,b;0,a,$\sqrt{2}$b,a;0,0,0,0) & $P4_1$ (76)/$P4_3$ (78) \\ \cline{2-4}
                                        & \multirow{3}{*}{$W_2$} & (a,0,0,0;0,a,0,0;0,0,0,0) & $P4_122$ (91)/$P4_322$ (95)  \\
                                        &       & (a,0,a,0;0,a,0,a;0,0,0,0) & $P4_12_12$ (92)/$P4_32_12$ (96)  \\
                                        &       & (a,b,a,b;0,a,$\sqrt{2}$b,a;0,0,0,0) & $P4_1$ (76)/$P4_3$ (78) \\ 
\hline
    \multirow{10}{*}{$Fd\bar{3}c$ (228)} & $X_1$ & (a,a;a,-a;0,0) & $P4_122$ (91)/$P4_322$ (95) \\\cline{2-4}
                                        & $X_2$ &  (a,a;a,-a;0,0) & $P4_122$ (91)/$P4_322$ (95) \\\cline{2-4}
                                        & \multirow{3}{*}{$X_3$} & (a,0;0,0;0,0) & $P4_122$ (91)/$P4_322$ (95) \\
                                        &       & (a,0;a,0;b,0) & $P4_12_12$ (92)/$P4_32_12$ (96) \\
                                        &       & (a,0;a,0;a,0) & $P4_332$ (212)/$P4_132$ (213) \\\cline{2-4}  
                                        & \multirow{2}{*}{$X_4$} & (a,0;a,0;0,0)/(a,0;0,0;0,0) & $P4_12_12$ (92)/$P4_32_12$ (96) \\
                                        &       & (a,0;0,b;a,0) & $P4_1$ (76)/$P4_3$ (78) \\ \cline{2-4} 
                                        & \multirow{3}{*}{$W_1W_2$} & & $P4_122$ (91)/$P4_322$ (95)  \\
                                        &       & & $P4_12_12$ (92)/$P4_32_12$ (96)  \\
                                        &       & & $P4_1$ (76)/$P4_3$ (78) \\ 
                                        \hline
    \multirow{5}{*}{$Ia\bar{3}d$ (230)} & $H_1$ & (0,a)/(a,0) & $P4_332$ (212)/$P4_132$ (213) \\
                                        & $H_4$ & (a,a,0,0,0,0) & $P4_122$ (91)/$P4_322$ (95) \\
                                        & $H_2H_3$ & (a,b,-b,a) & $P4_12_12$ (92)/$P4_32_12$ (96)  \\
                                        & $N_1$ & (a,0;a,0;0,0;0,0;0,0;0,0) & $P4_12_12$ (92)/$P4_32_12$ (96)  \\
                                        & $N_2$ & (a,0;a,0;0,0;0,0;0,0;0,0) & $P4_122$ (91)/$P4_322$ (95) \\
                                        
\hline
    \end{tabular}
    \caption{Table II continued for achiral cubic space groups. In view of the large number of possible domains for the achiral cubic space groups cases, only one or none is shown for clarity.}
\end{table*}


\begin{table*}
    \begin{tabular}{c|ccc}
    \hline
     high sym. achiral sg.   & IRREP & domain direction & enantiomorphic sg. \\
     \hline 

\hline
    \multirow{3}{*}{$P\bar{3}$ (147)} & $DT2DT3$& $(0,b,0,a)$ & $P3_1$ (144) \\
                                     & $DT2DT3$ &  $(a,0,b,0)$  & $P3_2$ (145) \\
                                     & $U1$&  &  $P3_1$ (144)/$P3_2$ (145)\\
\hline
    \multirow{4}{*}{$R\bar{3}$ (148)} & $LD2LD3$& $(0,b,0,a)$ & $P3_1$ (144) \\
                                        & $LD2LD3$ &  $(a,0,b,0)$  & $P3_2$ (145) \\
                                        & $SM1$&  & \multirow{2}{*}{$P3_1$ (144)/$P3_2$ (145)} \\
                                        & $Y1$&  &  \\
\hline
    \multirow{4}{*}{$P3m1$ (156)}   &  $DT3DU3$ & $(a,0,b,0)$ & $P3_1$ (144) \\
                                    & $DT3DU3$ &  $(0,-\cos{(\frac{\pi}{3})}a+\sin{(\frac{\pi}{3})}b,0,-\sin{(\frac{\pi}{3})}a-\cos{(\frac{\pi}{3})}b)$ & $P3_2$ (145) \\
                                    & $U1UA1$&  & \multirow{2}{*}{$P3_1$ (144)/$P3_2$ (145)} \\
                                    & $U2UA2$&  & \\
\hline
    \multirow{5}{*}{$P31m$ (157)}   & $DT3DU3$& $(a,0,b,0)$ & $P3_1$ (144)\\
                                    & $DT3DU3$ &  $(0,-\cos{(\frac{\pi}{3})}a+\sin{(\frac{\pi}{3})}b,0,-\sin{(\frac{\pi}{3})}a-\cos{(\frac{\pi}{3})}b)$ & $P3_2$ (145) \\
                                    & $U1UA1$&  & \multirow{3}{*}{$P3_1$ (144)/$P3_2$ (145)}\\
                                    & $U2UA2$&  & \\
                                    &$D1DA1$&   & \\
\hline
    \multirow{4}{*}{$P3c1$ (158)}   & $DT3DU3$ & $(a,0,b,0)$ & $P3_1$ (144) \\
                                    & $DT3DU3$ &  $(0,-a,0,-b)$  & $P3_2$ (145) \\
                                    & $U1UA1$&  & \multirow{2}{*}{$P3_1$ (144)/$P3_2$ (145)} \\
                                    & $U2UA2$&  & \\
\hline
    \multirow{5}{*}{$P31c$ (159)}   & $DT3DU3$& $(a,0,b,0)$ & $P3_1$ (144)\\
                                    & $DT3DU3$ &  $(0,-a,0,-b)$  & $P3_2$ (145) \\
                                    & $U1UA1$&  & \multirow{3}{*}{$P3_1$ (144)/$P3_2$ (145)}\\
                                    & $U2UA2$&  & \\
                                    &$D1DA1$&   & \\
\hline
    \multirow{5}{*}{$R3m$ (160)}    & $LD3LE3$& $(a,0,b,0)$ & $P3_1$ (144)\\
                                    & $LD3LE3$&  $(0,-\cos{(\frac{\pi}{3})}a-\sin{(\frac{\pi}{3})}b,0,\sin{(\frac{\pi}{3})}a-\cos{(\frac{\pi}{3})}b)$  & $P3_2$ (145) \\
                                    & $SM1$&  & \multirow{3}{*}{$P3_1$ (144)/$P3_2$ (145)} \\
                                    & $Y1$&  & \\
                                    & $GPGQ1$& & \\
\hline
    \multirow{5}{*}{$R3c$ (161)}    & $LD3LE3$ & $(a,0,b,0)$ & $P3_1$ (144)\\
                                    & $LD3LE3$ &  $(0,\cos{(\frac{\pi}{3})}a-\sin{(\frac{\pi}{3})}b,0,\sin{(\frac{\pi}{3})}a+\cos{(\frac{\pi}{3})}b)$  & $P3_2$ (145) \\
                                    & $SM1$&  & \multirow{3}{*}{$P3_1$ (144)/$P3_2$ (145)} \\
                                    & $Y1Y1$&  & \\
                                    & $GPGQ1$& & \\
\hline
                                    & $DT3$ & $(0,-a,0,b)$ & $P3_1$ (144)\\
                                    & $DT3$ & $(a,0,b,0)$ & $P3_2$ (145)\\
\multirow{2}{*}{$P\bar{3}1m$ (162)} & $U1,U2$&  & \multirow{2}{*}{$P3_1$ (144)/$P3_2$ (145)}\\
\multirow{2}{*}{$P\bar{3}1c$ (163)} & $D1$ &  & \\ \cline{2-4}
                                    & $DT3$ &  & \multirow{3}{*}{$P3_112$ (151)/$P3_212$ (153)} \\
                                    & $U1,U2$&  &  \\
                                    & $D1$&  &  \\              
\hline
    \end{tabular}
    \caption{Achiral trigonal space groups (first column) that hold chiral phonons from an irreducible representation that is not from the zone boundary (second column) and capable of generating an enantiomorphic subgroup (fourth column). The third column lists the domain direction that is associated with the enantiomorphic subgroup of column four, they are not all listed for clarity but they all have domain directions. }
\end{table*}

\begin{table*}
    \begin{tabular}{c|ccc}
    \hline
     high sym. achiral sg.   & IRREP & domain direction & enantiomorphic sg. \\
     \hline 
\hline
                                    & $DT3$ & $(0,-a,0,b)$ & \multirow{2}{*}{$P3_1$ (144)}\\
                                    & $P2P3$& $(0,b,0,a;d,0,c,0)$ & \\
                                    & $DT3$ & $(a,0,b,0)$ & \multirow{2}{*}{$P3_2$ (145)} \\
                                    & $P2P3$& $(a,0,b,0;0,c,0,d)$ & \\\cline{2-4}
    $P\bar{3}m1$ (164)              &$U1$ &  & \multirow{2}{*}{$P3_1$ (144)/$P3_2$ (145)}\\
    $P\bar{3}c1$ (165)              &$U2$ &  & \\\cline{2-4}
                                    & $P2P3$&  & $P3_112$ (151)/$P3_212$ (153) \\\cline{2-4}
                                    & $DT3$ &  & \multirow{3}{*}{$P3_121$ (152)/$P3_221$ (154)} \\
                                    &$U1$&  &  \\
                                    &$U2$&  &  \\
\hline
    \multirow{9}{*}{$R\bar{3}m$ (166)} 
                                    & $LD3$ & $(0,-a,0,b)$ & $P3_1$ (144)\\
                                    & $LD3$ & $(a,0,b,0)$ & $P3_2$ (145)\\\cline{2-4}
                                    & $SM2$ &  & \multirow{2}{*}{$P3_1$ (144)/$P3_2$ (145)}\\
                                    & $GP1$&  & \\ \cline{2-4}
                                    & $LD3$&  & $P3_121$ (152)/$P3_221$ (154) \\\cline{2-4}
                                    & $SM1$&  & \multirow{4}{*}{$P3_112$ (151)/$P3_212$ (153)} \\
                                    & $Y1$&  &  \\
                                    & $Y2$&  &  \\
                                    & $GP1$&  &  \\  
\hline
    \multirow{9}{*}{$R\bar{3}c$ (167)}
                                    & $LD3$ & $(0,-a,0,b)$ & $P3_1$ (144)\\
                                    & $LD3$ & $(a,0,b,0)$ & $P3_2$ (145)\\\cline{2-4}
                                    & $SM2$ &  & \multirow{3}{*}{$P3_1$ (144)/$P3_2$ (145)}\\
                                    & $Y1Y2$ & \\
                                    & $GP1$&  & \\\cline{2-4}
                                    & $LD3$&  & $P3_121$ (152)/$P3_221$ (154) \\\cline{2-4}
                                    & $SM1$&  & \multirow{3}{*}{$P3_112$ (151)/$P3_212$ (153)} \\
                                    & $Y1Y2$&  &  \\
                                    & $GP1$&  &  \\
\hline
    \multirow{4}{*}{$P\bar{6}$ (174)} & $DT2DT3$& $(0,b,0,a)$ & $P3_1$ (144)\\
                                      & $DT2DT3$& $(a,0,b,0)$ & $P3_2$ (145)\\\cline{2-4}
                                      & $U1$&  &\multirow{2}{*}{$P3_1$ (144)/$P3_2$ (145)}  \\
                                      & $D1DA1$&  &  \\
\hline
    \multirow{11}{*}{$P6/m$ (175)} & $DT4DT6$ & $(0,b,0,a)$ & \multirow{2}{*}{$P3_1$ (144)} \\
                                  & $P2P3$& $(0,b,0,a;0,d,0,c)$ & \\
                                  & $DT4DT6$ & $(a,0,b,0)$ & \multirow{2}{*}{$P3_2$ (145)}\\
                                  & $P2P3$& $(a,0,b,0;c,0,d,0)$ & \\\cline{2-4}
                                  & $U2$& & \multirow{2}{*}{$P3_1$ (144)/$P3_2$ (145)}\\
                                  & $D1$&  & \\\cline{2-4}
                                  & $DT3DT5$& $(a,0,b,0)$ & $P6_2$ (171) \\
                                  & $DT3DT5$& $(0,b,0,a)$ & $P6_4$ (172) \\\cline{2-4}
                                  & $P2P3$ &  & \multirow{3}{*}{$P6_2$ (171)/$P6_4$ (172)} \\
                                  & $U1$& &  \\
                                  & $D1$&  &  \\
\hline
    \end{tabular}
    \caption{Table IV continued. }
\end{table*}

\begin{table*}
    \begin{tabular}{c|ccc}
    \hline
     high sym. achiral sg.   & IRREP & domain direction & enantiomorphic sg. \\
     \hline 
\hline
    \multirow{11}{*}{$P6_3/m$ (176)} & $DT3DT5$& $(0,b,0,a)$ & \multirow{2}{*}{$P3_1$ (144)}\\
                                    & $P2P3$& $(0,b,0,a;0,d,0,c)$ & \\
                                    & $DT3DT5$& $(a,0,b,0)$ & \multirow{2}{*}{$P3_2$ (145)} \\
                                    & $P2P3$& $(a,0,b,0;c,0,d,0)$ & \\\cline{2-4}
                                    & $U1$ &  & \multirow{2}{*}{$P3_1$ (144)/$P3_2$ (145)} \\
                                    & $D1$ &  &  \\\cline{2-4}
                                    & $DT4DT6$& $(0,b,0,a)$ & $P6_1$ (169) \\
                                    & $DT4DT6$& $(a,0,b,0)$ & $P6_5$ (170) \\\cline{2-4}
                                    & $P2P3$&  & \multirow{3}{*}{$P6_1$ (169)/$P6_5$ (170)} \\
                                    & $U2$& & \\
                                    & $D1$ &  & \\
\hline
    \multirow{15}{*}{$P6mm$ (183)} & $DT5DU5$& $(0,-\cos{(\frac{\pi}{3})}a+\sin{(\frac{\pi}{3})}b,0,-\sin{(\frac{\pi}{3})}a-\cos{(\frac{\pi}{3})}b $& $P6_2$ (171)\\
                                   & $DT5DU5$& $(a,0,b,0)$& $P6_4$ (172)\\\cline{2-4}
                                  & $U1UA1$& & \multirow{5}{*}{$P6_2$ (171)/$P6_4$ (172)}\\\
                                  & $U2UA2$& & \\
                                  & $P3PA3$&  &  \\
                                  & $D1DA1$&  &  \\
                                  & $D2DA2$&  &  \\\cline{2-4}
                                  & $DT6DU6$& $(a,0,b,0)$ & \multirow{2}{*}{$P3_1$ (144)}\\
                                  & $P3PA3$& $(a,0,b,0;c,0,d,0)$ & \\
                                  & $DT6DU6$& $(0,-\cos{(\frac{\pi}{3})}a+\sin{(\frac{\pi}{3})}b,0,-\sin{(\frac{\pi}{3})}a-\cos{(\frac{\pi}{3})}b $ & \multirow{2}{*}{$P3_2$ (145)}\\
                                  & $P3PA3$& $(0,c,0,d;0,a,0,b)$ & \\\cline{2-4}
                                  & $U3UA3$& & \multirow{4}{*}{$P3_1$ (144)/$P3_2$ (145)}\\
                                  & $U4UA4$& & \\
                                  & $D1DA1$&  & \\
                                  & $D2DA2$&  & \\     
\hline
    \multirow{14}{*}{$P6cc$ (184)} & $DT5DU5$& $(0,-a,0,-b)$ & $P6_2$ (171)\\
                                  & $DT5DU5$& $(a,0,b,0)$& $P6_4$ (172)\\\cline{2-4}
                                  & $U1UA1$& & \multirow{5}{*}{$P6_2$ (171)/$P6_4$ (172)}\\\
                                  & $U2UA2$& & \\  
                                  & $P3PA3$&  &  \\
                                  & $D1DA1$&  &  \\
                                  & $D2DA2$&  &  \\\cline{2-4}
                                  & $DT6DU6$& $(a,0,b,0)$ & $P3_1$ (144)\\
                                  & $DT6DU6$& $(0,-a,0,-b)$ & $P3_2$ (145)\\\cline{2-4}
                                  & $U3UA3$& & \multirow{5}{*}{$P3_1$ (144)/$P3_2$ (145)}\\
                                  & $U4UA4$& & \\
                                  & $P3PA3$&  & \\
                                  & $D1DA1$&  & \\
                                  & $D2DA2$&  &\\ 
\hline
    \end{tabular}
    \caption{Table IV continued. }
\end{table*}

\begin{table*}
    \begin{tabular}{c|ccc}
    \hline
     high sym. achiral sg.   & IRREP & domain direction & enantiomorphic sg. \\
     \hline 
\hline
    \multirow{12}{*}{$P6_3cm$ (185)} & $DT5DU5$& $(a,0,b,0)$ & $P3_1$ (144)\\
                                    & $DT5DU5$& $(0,-a,0,-b)$ & $P3_2$ (145)\\\cline{2-4}
                                    & $U1UA1$& & \multirow{5}{*}{$P3_1$ (144)/$P3_2$ (145)}\\\
                                    & $U2UA2$& & \\
                                    & $P3PA3$ &  &\\
                                    & $D1DA1$&  & \\
                                    & $D2DA2$&  & \\\cline{2-4}
                                    & $DT6DU6$& $(a,0,b,0)$ & $P6_1$ (169)\\
                                    & $DT6DU6$& $(0,-a,0,-b)$ & $P6_5$ (170)\\\cline{2-4}
                                    & $U3UA3$& & \multirow{5}{*}{$P6_1$ (169)/$P6_5$ (170)}\\
                                    & $U4UA4$& & \\ 
                                    & $P3PA3$&  &  \\
                                    & $D1DA1$&  &  \\
                                    & $D2DA2$&  &  \\
\hline
    \multirow{15}{*}{$P6_3mc$ (186)} & $DT5DU5$& $(a,0,b,0)$ & \multirow{2}{*}{$P3_1$ (144)}\\
                                    & $P3PA3$ & $(a,0,b,0;c,0,d,0)$ & \\
                                    & $DT5DU5$& $(0,-\cos{(\frac{\pi}{3})}a+\sin{(\frac{\pi}{3})}b,0,-\sin{(\frac{\pi}{3})}a-\cos{(\frac{\pi}{3})}b $& \multirow{2}{*}{$P3_2$ (145)}\\ 
                                    & $P3PA3$ & $(0,c,0,d;0,a,0,b)$ & \\\cline{2-4}
                                    & $U1UA1$& & \multirow{4}{*}{$P3_1$ (144)/$P3_2$ (145)}\\\
                                    & $U2UA2$& & \\
                                    & $D1DA1$&  & \\
                                    & $D2DA2$&  & \\\cline{2-4}
                                    & $DT6DU6$& $(a,0,b,0)$ & $P6_1$ (169)\\
                                    & $DT6DU6$& $(0,-\cos{(\frac{\pi}{3})}a+\sin{(\frac{\pi}{3})}b,0,-\sin{(\frac{\pi}{3})}a-\cos{(\frac{\pi}{3})}b$ & $P6_5$ (170)\\\cline{2-4} 
                                    & $U3UA3$& & \multirow{5}{*}{$P6_1$ (169)/$P6_5$ (170)}\\
                                    & $U4UA4$& & \\
                                    & $P3PA3$&  &  \\
                                    & $D1DA1$&  &  \\
                                    & $D2DA2$&  &  \\
\hline
    \multirow{11}{*}{$P\bar{6}m2$ (187)}  &  $DT3$& $(0,\cos{(\frac{\pi}{3})}a+\sin{(\frac{\pi}{3})}b,0,\sin{(\frac{\pi}{3})}a-\cos{(\frac{\pi}{3})}b$ & $P3_1$ (144) \\
                                        &  $DT3$& $(a,0,b,0)$ & $P3_2$ (145) \\\cline{2-4}
                                      & $U1$&  & \multirow{4}{*}{$P3_1$ (144)/$P3_2$ (145)}\\
                                      & $U2$&  & \\
                                      & $D1DA1$&  & \\
                                      & $D2DA2$&  & \\\cline{2-4}
                                      &  $DT3$&  & \multirow{5}{*}{$P3_112$ (151)/$P3_212$ (153)} \\
                                      &$U1$&  &  \\
                                      & $U2$&  &  \\
                                      & $D1DA1$&  &  \\
                                      & $D2DA2$&  &  \\
\hline
    \end{tabular}
    \caption{Table IV continued. }
\end{table*}

\begin{table*}
    \begin{tabular}{c|ccc}
    \hline
     high sym. achiral sg.   & IRREP & domain direction & enantiomorphic sg. \\
     \hline
\hline
    \multirow{11}{*}{$P\bar{6}c2$ (188)} &  $DT3$& $(0,a,0,-b)$ & $P3_1$ (144) \\
                                        &  $DT3$& $(a,0,b,0)$ & $P3_2$ (145) \\\cline{2-4}
                                      & $U1$&  & \multirow{4}{*}{$P3_1$ (144)/$P3_2$ (145)}\\
                                      & $U2$&  & \\
                                      & $D1DA1$&  & \\
                                      & $D2DA2$&  & \\\cline{2-4}
                                      &$U1$&  & \multirow{5}{*}{$P3_112$ (151)/$P3_212$ (153)} \\
                                      &  $DT3$&  & \\
                                      & $U2$&  &  \\
                                      & $D1DA1$&  &  \\
                                      & $D2DA2$&  &  \\
\hline
    \multirow{8}{*}{$P\bar{6}2m$ (189)} & $DT3$ & $(0,\cos{(\frac{\pi}{3})}a+\sin{(\frac{\pi}{3})}b,0,\sin{(\frac{\pi}{3})}a-\cos{(\frac{\pi}{3})}b$ & $P3_1$ (144) \\
                                      & $DT3$ & $(a,0,b,0)$ & $P3_2$ (145) \\\cline{2-4}
                                      & $U1$&  & \multirow{3}{*}{$P3_1$ (144)/$P3_2$ (145)}\\
                                      & $U2$&  &  \\
                                      & $D1$&  &  \\\cline{2-4}  
                                      & $DT3$&  & \multirow{4}{*}{$P3_112$ (151)/$P3_212$ (153)} \\
                                      & $U1$&  &  \\
                                      & $U2$&  &  \\
                                      & $D1$&  &  \\
\hline
    \multirow{8}{*}{$P\bar{6}2c$ (190)} & $DT3$ & $(0,-a,0,-b)$ & $P3_1$ (144) \\
                                      & $DT3$ & $(a,0,b,0)$ & $P3_2$ (145) \\\cline{2-4}
                                      & $U1$&  & \multirow{3}{*}{$P3_1$ (144)/$P3_2$ (145)} \\
                                      & $U2$&  &  \\
                                      & $D1$&  & \\\cline{2-4}  
                                      & $DT3$&  &  \multirow{4}{*}{$P3_112$ (151)/$P3_212$ (153)} \\
                                      & $U1$&  &  \\
                                      & $U2$&  &  \\
                                      & $D1$&  &   \\                  

\hline
    \end{tabular}
    \caption{Table IV continued. }
\end{table*}

\begin{table*}
    \begin{tabular}{c|ccc}
    \hline
     high sym. achiral sg.   & IRREP & domain direction & enantiomorphic sg. \\
     \hline

\hline
    \multirow{19}{*}{$P6/mmm$ (191)}    & $DT6$ & $(0,-a,0,b)$ & \multirow{2}{*}{$P3_1$ (144)} \\
                                        & $P3$ & $(0,-a,0,b;0,-c,0,d)$ &  \\
    \multirow{17}{*}{$P6/mcc$ (192)}    & $DT6$ & $(a,0,b,0)$ & \multirow{2}{*}{$P3_2$ (145)} \\
                                        & $P3$ & $(a,0,b,0;c,0,d,0)$ &   \\\cline{2-4}
                                        & $U3, U4$ &  & \multirow{2}{*}{$P3_1$ (144)/$P3_2$ (145)} \\
                                        & $D1, D2$ &  &   \\\cline{2-4}
                                        & $DT6$&  & \multirow{3}{*}{$P3_112$ (151)/$P3_212$ (153)} \\
                                        & $U3, U4$&  &   \\
                                        & $D1, D2, P3$ &  & \\\cline{2-4}
                                        & $DT6$&  &\multirow{3}{*}{ $P3_121$ (152)/$P3_221$ (154)}\\
                                        & $U3, U4$&  &  \\
                                        & $D1, D2, P3$ &  & \\\cline{2-4}
                                        & $DT5$ & $(0,-a,0,b)$ & $P6_2$ (171) \\
                                        & $DT5$ & $(a,0,b,0)$ & $P6_4$ (172) \\\cline{2-4}
                                        & $U1, U2$ &  & \multirow{2}{*}{$P6_2$ (171)/$P6_4$ (172)} \\
                                        & $D1, D2, P3$ &  &   \\\cline{2-4}
                                        & $DT5$&  & \multirow{3}{*}{$P6_222$ (180)/$P6_422$ (181)} \\
                                        & $U1, U2$&  &   \\
                                        & $D1, D2, P3$&  &   \\
\hline
    \multirow{19}{*}{$P6_3/mcm$ (193)}  & $DT5$ & $(a,0,b,0)$ & \multirow{2}{*}{$P3_1$ (144)} \\
                                        & $P3$ & $(0,-a,0,b;0,-c,0,d)$ &  \\
    \multirow{17}{*}{$P6_3/mmc$ (194)}  & $DT5$ & $(0,-a,0,b)$ & \multirow{2}{*}{$P3_2$ (145)} \\
                                        & $P3$ & $(a,0,b,0;c,0,d,0)$ &   \\\cline{2-4}
                                        & $U1, U2$ &  & \multirow{2}{*}{$P3_1$ (144)/$P3_2$ (145)} \\
                                        & $D1, D2$ &  &  \\\cline{2-4}
                                        &  $DT5$  & &\multirow{3}{*}{$P3_112$ (151)/$P3_212$ (153)} \\
                                        & $U1, U2$  &   \\
                                        & $D1, D2, P3$ &  & \\\cline{2-4}
                                        & $DT5$&  & \multirow{3}{*}{ $P3_121$ (152)/$P3_221$ (154)}\\
                                        & $U1, U2$&  &  \\
                                        & $D1, D2, P3$ &  &  \\\cline{2-4}
                                        & $DT6$& $(0,-a,0,b)$ & $P6_1$ (169))\\
                                        & $DT6$& $(a,0,b,0)$ & $P6_5$ (170)\\\cline{2-4}
                                        & $U3, U4$&  & \multirow{2}{*}{$P6_2$ (171)/$P6_4$ (172)}\\
                                        & $D1, D2, P3$&  &  \\\cline{2-4}
                                        & $DT6$&  & \multirow{3}{*}{$P6_222$ (180)/$P6_422$ (181)} \\
                                        & $U3, U4$&  &   \\
                                        & $D1, D2, P3$&  &   \\
\hline
    \end{tabular}
    \caption{Table IV continued. }
\end{table*}

\clearpage
\subsection{Chiral phonons from Sohncke non-enantiomorphic space-groups}
We now apply the same analysis to non-enantiomorphic Sohncke space groups. As in the previous section, we identify the q-point of the chiral phonon and the domain direction leading to each enantiomorphic subgroup. When the number of possible domain directions is large, only representative ones are reported.


\begin{table*}[b]
\scriptsize
    \begin{tabular}{c|ccc}
    \hline
     high sym. sohncke sg.   & IRREP & domain direction & enantiomorphic sg. \\
     \hline 

\hline
    \multirow{2}{*}{$P4_2$ (77)}    & $Z3$& $(a)$ & $P4_1$ (76) \\
                                    & $Z4$& $(a)$ & $P4_3$ (78) \\
\hline
    \multirow{4}{*}{$I4_1$ (80)}    & $X1$& $(a;a), (a;-a)$ & $P4_1$ (76) \\
                                    & $X2$& $(a;a), (a;-a)$ & $P4_3$ (78) \\
                                    & $M3$& $(a)$ & $P4_1$ (76) \\
                                    & $M4$& $(a)$ & $P4_3$ (78) \\
\hline
    \multirow{4}{*}{$P4_222$ (93)}  & $Z1$& $(a)$ & $P4_122$ (91) \\
                                    & $Z2$& $(a)$ & $P4_322$ (95) \\
                                    & $Z3$& $(a)$ & $P4_122$ (91) \\
                                    & $Z4$& $(a)$ & $P4_322$ (95) \\
\hline
    \multirow{4}{*}{$P4_22_12$ (94)}    & $Z1$& $(a)$ & $P4_12_12$ (92) \\
                                        & $Z2$& $(a)$ & $P4_32_12$ (96) \\
                                        & $Z3$& $(a)$ & $P4_12_12$ (92) \\
                                        & $Z4$& $(a)$ & $P4_32_12$ (96) \\
\hline
    \multirow{8}{*}{$I4_122$ (98)}      & $X1$& $(a;a), (-a;a)$ & $P4_122$ (91) \\
                                        & $X3$& $(a;a), (-a;a)$ & $P4_322$ (95) \\
                                        & $X2$& $(a;a), (-a;a)$ & $P4_12_12$ (92) \\
                                        & $X4$& $(a;a), (-a;a)$ & $P4_32_12$ (96) \\
                                        & $M1$& $(a)$ & $P4_122$ (91) \\
                                        & $M2$& $(a)$ & $P4_322$ (95) \\
                                        & $M3$& $(a)$ & $P4_12_12$ (92) \\
                                        & $M4$& $(a)$ & $P4_32_12$ (96) \\
\hline
\hline
    \multirow{2}{*}{$P4$ (75)}  & $LD4LE4$& $(a,b)$ & $P4_1$ (76) \\
                                & $LD3LE3$& $(a,b)$ & $P4_3$ (78) \\
\hline
    \multirow{2}{*}{$I4$ (79)}  & $LD4LE4$& $(a;b)$ & $P4_1$ (76) \\
                                & $LD3LE3$& $(a;b)$ & $P4_3$ (78) \\
\hline
    \multirow{4}{*}{$P422$ (89)}    & $LD4$& $(a,b)$ & $P4_1$ (76) \\
                                    & $LD3$& $(a,b)$ & $P4_3$ (78) \\
                                    & $LD4$& $(a,0),(a,a)$ & $P4_122$ (91) \\
                                    & $LD3$& $(a,0),(a,a)$ & $P4_322$ (95) \\
\hline
    \multirow{4}{*}{$P42_12$ (90)}  & $LD4$& $(a,b)$ & $P4_1$ (76) \\
                                    & $LD3$& $(a,b)$ & $P4_3$ (78) \\
                                    & $LD4$& $(a,0),(a,a)$ & $P4_12_12$ (92) \\
                                    & $LD3$& $(a,0),(a,a)$ & $P4_32_12$ (96) \\
\hline
    \multirow{6}{*}{$I422$ (97)}    & $LD4$& $(a,b)$ & $P4_1$ (76) \\
                                    & $LD3$& $(a,b)$ & $P4_3$ (78) \\
                                    & $LD4$& $(a,0)$ & $P4_122$ (91) \\
                                    & $LD3$& $(a,0)$ & $P4_322$ (95) \\
                                    & $LD4$& $(a, \tan{(\frac{\pi}{8})}a)$ & $P4_12_12$ (92) \\
                                    & $LD3$& $(a, \tan{(\frac{\pi}{8})}a)$ & $P4_32_12$ (96) \\
\hline
    \end{tabular}
    \caption{Sohncke non-enantiomorphic tetragonal space groups (first column) that hold chiral phonons (second column) capable of generating an enantiomorphic subgroup (fourth column). The third column lists the domain direction that is associated with the enantiomorphic subgroup of column four.  }
\end{table*}

\begin{table*}
    \begin{tabular}{c|ccc}
    \hline
     high sym. sohncke sg.   & IRREP & domain direction & enantiomorphic sg. \\
     \hline 

\hline
    \multirow{3}{*}{$P3$ (143)} & $DT3DU3$ &  $(a,b)$  & $P3_1$ (144) \\
                                & $DT2DU2$& $(a,b)$ & $P3_2$ (145) \\
                                & $U1UA1$&  &  $P3_1$ (144)/$P3_2$ (145)\\
\hline
    \multirow{6}{*}{$R3$ (146)} & $LD3LE3$ &  $(a,b)$  & \multirow{3}{*}{$P3_1$ (144)} \\
                                & $SM1SN1$& $(a,b;-\cos{(\frac{\pi}{3})}a+\sin{(\frac{\pi}{3})}b,-\sin{(\frac{\pi}{3})}a-\cos{(\frac{\pi}{3})}b;a,b)$ &  \\
                                & $Y1YA1$& $(a,b;-\cos{(\frac{\pi}{3})}a-\sin{(\frac{\pi}{3})}b,\sin{(\frac{\pi}{3})}a-\cos{(\frac{\pi}{3})}b;a,b)$ &  \\
                                & $LD2LE2$& $(a,b)$ & \multirow{3}{*}{$P3_2$ (145)} \\
                                & $SM1SN1$&  $(a,b;-\cos{(\frac{\pi}{3})}a-\sin{(\frac{\pi}{3})}b,\sin{(\frac{\pi}{3})}a-\cos{(\frac{\pi}{3})}b;a,b)$  &  \\
                                & $Y1YA1$&  $(a,b;-\cos{(\frac{\pi}{3})}a+\sin{(\frac{\pi}{3})}b,-\sin{(\frac{\pi}{3})}a-\cos{(\frac{\pi}{3})}b;a,b)$  &  \\
\hline
    \multirow{9}{*}{$P312$ (149)}   & $DT3$& $(a,b)$ & $P3_1$ (144) \\
                                    & $DT2$& $(a,b)$ & $P3_2$ (145) \\
                                    & $U1$&  & \multirow{2}{*}{$P3_1$ (144)/$P3_2$ (145)}\\
                                    & $D1DA1$ & &  \\
                                    & $DT3$& $(a,0), (-\cos{(\frac{\pi}{3})}a,-\sin{(\frac{\pi}{3})}a), (-\cos{(\frac{\pi}{3})}a,\sin{(\frac{\pi}{3})}a)$ & \multirow{2}{*}{$P3_112$ (151)} \\
                                    & $U1$& $(a, \tan{(\frac{\pi}{3})}a; -a,\tan{(\frac{\pi}{3})}a;2a,0)$ &  \\
                                    & $DT2$& $(a,0), (-\cos{(\frac{\pi}{3})}a,-\sin{(\frac{\pi}{3})}a), (-\cos{(\frac{\pi}{3})}a,\sin{(\frac{\pi}{3})}a)$ & \multirow{2}{*}{$P3_212$ (153)} \\
                                    & $U1$& $(a,-\tan{(\frac{\pi}{3})}a;  a,\tan{(\frac{\pi}{3})}a;2a,0)$ &  \\
                                    & $D1DA1$ & &  $P3_112$ (151)/$P3_212$ (153)\\ \cline{2-4}
\hline
    \multirow{11}{*}{$P321$ (150)}  & $DT3$& $(a,b)$ & \multirow{2}{*}{$P3_1$ (144)} \\
                                    & $P3PA3$& $(a,b;c,d)$ &  \\
                                    & $DT2$& $(a,b)$ & \multirow{2}{*}{$P3_2$ (145)} \\
                                    & $P2PA2$& $(a,b;c,d)$ &  \\
                                    & $U1$&  &  $P3_1$ (144)/$P3_2$ (145)\\ \cline{2-4}
                                    & $P3PA3$& $(a,b;a,b)$ & $P3_112$ (151) \\
                                    & $P2PA2$& $(a,b;a,b)$ & $P3_212$ (153) \\ \cline{2-4}
                                    & $DT3$& $(a,0), (-\cos{(\frac{\pi}{3})}a,-\sin{(\frac{\pi}{3})}a), (-\cos{(\frac{\pi}{3})}a,\sin{(\frac{\pi}{3})}a)$ & \multirow{2}{*}{$P3_121$ (152)} \\
                                    & $U1$& $(a, \tan{(\frac{\pi}{3})}a; -a,\tan{(\frac{\pi}{3})}a;2a,0)$ &  \\
                                    & $DT2$& $(a,0), (-\cos{(\frac{\pi}{3})}a,-\sin{(\frac{\pi}{3})}a), (-\cos{(\frac{\pi}{3})}a,\sin{(\frac{\pi}{3})}a)$ & \multirow{2}{*}{$P3_221$ (154)} \\
                                    & $U1$& $(a,-\tan{(\frac{\pi}{3})}a;  a,\tan{(\frac{\pi}{3})}a;2a,0)$ & \\
\hline
    \end{tabular}
    \caption{Sohncke non-enantiomorphic trigonal and hexagonal space groups (first column) that hold chiral phonons (second column) capable of generating an enantiomorphic subgroup (fourth column). The third column lists the domain direction that is associated with the enantiomorphic subgroup of column four, they are not all listed for clarity but they all have domain directions.  }
\end{table*}

\begin{table*}
    \begin{tabular}{c|ccc}
    \hline
     high sym. sohncke sg.   & IRREP & domain direction & enantiomorphic sg. \\
     \hline 

\hline
    \multirow{14}{*}{$R32$ (155)}   & $LD3$& $(a,b)$ & \multirow{2}{*}{$P3_1$ (144)} \\
                                    & $SM2SN2$& $(a,b;-\cos{(\frac{\pi}{3})}a+\sin{(\frac{\pi}{3})}b,-\sin{(\frac{\pi}{3})}a-\cos{(\frac{\pi}{3})}b;a,b)$ & \\
                                    & $LD2$& $(a,b)$ & \multirow{2}{*}{$P3_2$ (145)} \\
                                    & $SM2SN2$& $(a,b;-\cos{(\frac{\pi}{3})}a-\sin{(\frac{\pi}{3})}b,\sin{(\frac{\pi}{3})}a-\cos{(\frac{\pi}{3})}b;a,b)$ & \\
                                    & $GP1GQ1$ & &  $P3_1$ (144)/$P3_2$ (145)\\ \cline{2-4}
                                    & $LD3$& $(a,0), (-\cos{(\frac{\pi}{3})}a,-\sin{(\frac{\pi}{3})}a), (-\cos{(\frac{\pi}{3})}a,\sin{(\frac{\pi}{3})}a)$ & $P3_121$ (152) \\
                                    & $LD2$& $(a,0), (-\cos{(\frac{\pi}{3})}a,-\sin{(\frac{\pi}{3})}a), (-\cos{(\frac{\pi}{3})}a,\sin{(\frac{\pi}{3})}a)$ & $P3_221$ (154) \\\cline{2-4}
                                    & $SM1SN1$& $(a,b;-\cos{(\frac{\pi}{3})}a+\sin{(\frac{\pi}{3})}b,-\sin{(\frac{\pi}{3})}a-\cos{(\frac{\pi}{3})}b;a,b)$ & \multirow{3}{*}{$P3_112$ (151)}\\
                                    & $Y1YA1$& $(a,b;-\cos{(\frac{\pi}{3})}a-\sin{(\frac{\pi}{3})}b,\sin{(\frac{\pi}{3})}a-\cos{(\frac{\pi}{3})}b;a,b)$&  \\
                                    & $Y2YA2$& $(a,b;-\cos{(\frac{\pi}{3})}a-\sin{(\frac{\pi}{3})}b,\sin{(\frac{\pi}{3})}a-\cos{(\frac{\pi}{3})}b;a,b)$&  \\
                                    & $SM1SN1$& $(a,b;-\cos{(\frac{\pi}{3})}a-\sin{(\frac{\pi}{3})}b,\sin{(\frac{\pi}{3})}a-\cos{(\frac{\pi}{3})}b;a,b)$ &\multirow{3}{*}{$P3_212$ (153)} \\
                                    & $Y1YA1$&$(a,b;-\cos{(\frac{\pi}{3})}a+\sin{(\frac{\pi}{3})}b,-\sin{(\frac{\pi}{3})}a-\cos{(\frac{\pi}{3})}b;a,b)$&  \\
                                    & $Y2YA2$&$(a,b;-\cos{(\frac{\pi}{3})}a+\sin{(\frac{\pi}{3})}b,-\sin{(\frac{\pi}{3})}a-\cos{(\frac{\pi}{3})}b;a,b)$&  \\
                                    & $GP1GQ1$ & &  $P3_112$ (151)/$P3_212$ (153)\\
\hline
    \multirow{12}{*}{$P6$ (168)} & $DT6DU6$& $(a,b) $ & \multirow{2}{*}{$P3_1$ (144)}\\
                                & $P3PA3$ &  $(a,b;c,d)$  &  \\
                                & $DT4DU4$ &  $(a,b)$  & \multirow{2}{*}{$P3_2$ (145)} \\
                                & $P2PA2$ &  $(a,b;c,d)$  &   \\
                                & $U2UA2$ &   &  \multirow{2}{*}{$P3_1$ (144)/$P3_2$ (145)} \\
                                & $D1DA1$ &   &  \\ \cline{2-4}
                                & $DT3DU3$ &  $(a,b)$  &  \multirow{2}{*}{$P6_2$ (171)} \\
                                & $P2PA2$ &  $(a,b;a,b)$  &  \\
                                & $DT5DU6$& $(a,b)$ &  \multirow{2}{*}{$P6_4$ (172)} \\
                                & $P3PA3$ &  $(a,b;a,b)$  & \\
                                & $U1UA1$ &   &  \multirow{2}{*}{$P6_2$ (171)/$P6_4$ (172)} \\
                                & $D1DA1$ &   &  \\
\hline
    \multirow{12}{*}{$P6_3$ (173)}  & $DT5DU5$& $(a,b) $ & \multirow{2}{*}{$P3_1$ (144)}\\
                                    & $P3PA3$& $(a,b;c,d) $ &  \\
                                    & $DT3DU3$& $(a,b) $ & \multirow{2}{*}{$P3_2$ (145)}\\
                                    & $P2PA2$& $(a,b;c,d) $ &  \\
                                    & $U1UA1$ &   &  \multirow{2}{*}{$P3_1$ (144)/$P3_2$ (145)} \\
                                    & $D1DA1$ &   &  \\ \cline{2-4}
                                    & $DT6DU6$& $(a,b) $ & \multirow{2}{*}{$P6_1$ (169)}\\
                                    & $P3PA3$&  $(a,b;-a,-b)$  & \\
                                    & $DT4DU4$&  $(a,b)$  & \multirow{2}{*}{$P6_5$ (170)} \\
                                    & $P2PA2$&  $(a,b;-a,-b)$  & \\
                                    & $U2UA2$ &   &  \multirow{2}{*}{$P6_1$ (169)/$P6_5$ (170)} \\
                                    & $D1DA1$ &   &  \\
\hline
    \end{tabular}
    \caption{Table XI continued. }
\end{table*}

\begin{table*}
    \begin{tabular}{c|ccc}
    \hline
     high sym. sohncke sg.   & IRREP & domain direction & enantiomorphic sg. \\
     \hline 

\hline
    \multirow{32}{*}{$P622$ (177)}  & $DT3$& $(a,0) $ & $P6_222$ (180)\\
                                    & $DT3$& $(a,b) $ & $P6_2$ (171)\\
                                    & $DT4$& $(a,0) $ & $P3_221$ (154)\\
                                    & $DT4$& $(a,\cot{(\frac{\pi}{3})}a) $ & $P3_212$ (153)\\
                                    & $DT4$& $(a,b) $ & $P3_2$ (145)\\
                                    & $DT5$& $(a,0) $ & $P6_422$ (181)\\
                                    & $DT5$& $(a,b) $ & $P6_4$ (172)\\
                                    & $DT6$& $(a,0) $ & $P3_121$ (152)\\
                                    & $DT6$& $(a,\cot{(\frac{\pi}{3})}a) $ & $P3_112$ (151)\\
                                    & $DT6$& $(a,b) $ & $P3_1$ (144)\\
                                    & $P2$& $(a,0;a,0) $ & $P6_222$ (180)\\
                                    & $P2$& $(a,b;a,b) $ & $P6_2$ (171)\\
                                    & $P2$& $(a,0;b,0) $ & $P3_221$ (154)\\
                                    & $P2$& $(a,b;a,-b) $ & $P3_212$ (153)\\
                                    & $P2$& $(a,b;c,d) $ & $P3_2$ (145)\\
                                    & $P3$& $(a,0;a,0) $ & $P6_422$ (181)\\
                                    & $P3$& $(a,b;a,b) $ & $P6_4$ (172)\\
                                    & $P3$& $(a,0;b,0) $ & $P3_121$ (152)\\
                                    & $P3$& $(a,b;a,-b) $ & $P3_112$ (151)\\
                                    & $P3$& $(a,b;c,d) $ & $P3_1$ (144)\\
                                    & $U1$& $(a, \tan{(\frac{\pi}{3})}a; -a,\tan{(\frac{\pi}{3})}a;2a,0)$ &  $P6_222$ (180)\\
                                    & $U1$& $(a,-\tan{(\frac{\pi}{3})}a;  a,\tan{(\frac{\pi}{3})}a;2a,0)$ &  $P6_422$ (181)\\
                                    & $U1$& & $P6_2$ (171)/$P6_4$ (172)\\
                                    & $U2$& $(a,-\tan{(\frac{\pi}{3})}a;  a,\tan{(\frac{\pi}{3})}a;2a,0)$ & $P3_112$ (151) \\
                                    & $U2$& $(a, \tan{(\frac{\pi}{3})}a; a,-\tan{(\frac{\pi}{3})}a;2a,0)$ & $P3_121$ (153) \\
                                    & $U2$&  & $P3_121$ (152)/$P3_212$ (154)\\
                                    & $U2$&  & $P3_1$ (144)/$P3_2$ (145)\\
                                    & $D1$&  & $P6_222$ (180)/$P6_422$ (181)\\
                                    & $D1$&  & $P6_2$ (171)/$P6_4$ (172)\\
                                    & $D1$&  & $P3_112$ (151)/$P3_212$ (153)\\
                                    & $D1$&  & $P3_121$ (152)/$P3_221$ (154)\\
                                    & $D1$&  & $P3_1$ (144)/$P3_2$ (145)\\
\hline
    \end{tabular}
    \caption{Table XI continued. }
\end{table*}

\begin{table*}
    \begin{tabular}{c|ccc}
    \hline
     high sym. sohncke sg.   & IRREP & domain direction & enantiomorphic sg. \\
     \hline 

\hline
    \multirow{31}{*}{$P6_322$ (182)}& $DT3$& $(a,0) $ & $P3_221$ (154)\\
                                    & $DT3$& $(a,\cot{(\frac{\pi}{3})}a) $ & $P3_212$ (153)\\
                                    & $DT3$& $(a,b) $ & $P3_2$ (145)\\
                                    & $DT4$& $(a,0) $ & $P6_522$ (179)\\
                                    & $DT4$& $(a,b) $ & $P6_5$ (170)\\
                                    & $DT5$& $(a,0) $ & $P3_121$ (152)\\
                                    & $DT5$& $(a,\cot{(\frac{\pi}{3})}a) $ & $P3_112$ (151)\\
                                    & $DT5$& $(a,b) $ & $P3_1$ (144)\\
                                    & $DT6$& $(a,0) $ & $P6_122$ (178)\\
                                    & $DT6$& $(a,b) $ & $P6_1$ (169)\\
                                    & $P2$& $(a,\cot{(\frac{\pi}{3})}a;a,-\cot{(\frac{\pi}{3})}a) $ & $P6_522$ (179)\\
                                    & $P2$& $(a,b;-a,-b) $ & $P6_5$ (170)\\
                                    & $P2$& $(a,\cot{(\frac{\pi}{3})}a;b,0) $ & $P3_221$ (154)\\
                                    & $P2$& $(a,b;a,-b) $ & $P3_212$ (153)\\
                                    & $P2$& $(a,b;c,d) $ & $P3_2$ (145)\\
                                    & $P3$& $(a,\cot{(\frac{\pi}{3})}a;a,-\cot{(\frac{\pi}{3})}a) $ & $P6_122$ (178)\\
                                    & $P3$& $(a,b;-a-,b) $ & $P6_1$ (169)\\
                                    & $P3$& $(a,\cot{(\frac{\pi}{3})}a;b,0);b,0) $ & $P3_121$ (152)\\
                                    & $P3$& $(a,b;a,-b) $ & $P3_112$ (151)\\
                                    & $P3$& $(a,b;c,d) $ & $P3_1$ (144)\\
                                    & $U1$& $(a,-\tan{(\frac{\pi}{3})}a;  a,\tan{(\frac{\pi}{3})}a;2a,0)$ & $P3_121$ (152) \\
                                    & $U1$& $(a, \tan{(\frac{\pi}{3})}a; a,-\tan{(\frac{\pi}{3})}a;2a,0)$ & $P3_221$ (154) \\
                                    & $U1$&  &  $P3_112$ (151)/$P3_212$ (153)\\
                                    & $U1$&  & $P3_1$ (144)/$P3_2$ (145)\\
                                    & $U2$& & $P6_122$ (178)/$P6_522$ (179)\\
                                    & $U2$& & $P6_1$ (168)/$P6_5$ (169)\\
                                    & $D1$&  & $P6_122$ (178)/$P6_522$ (179)\\
                                    & $D1$&  & $P6_1$ (169)/$P6_5$ (170)\\
                                    & $D1$&  & $P3_112$ (151)/$P3_212$ (153)\\
                                    & $D1$&  & $P3_121$ (152)/$P3_221$ (154)\\
                                    & $D1$&  & $P3_1$ (144)/$P3_2$ (145)\\
\hline
    \end{tabular}
    \caption{Table XI continued. }
\end{table*}


\begin{table*}
    \begin{tabular}{c|ccc}
    \hline
     high sym. sohncke sg.   & IRREP & domain direction & enantiomorphic sg. \\
     \hline 

\hline
    \multirow{12}{*}{$P4_232$ (208)} & $X1$& $(a;0;0),(0;a;0),(0;0;a)$ & $P4_122$ (91) \\
                                    & $X2$& $(a;0;0),(0;a;0),(0;0;a)$ & $P4_322$ (95) \\
                                    & $X1$& $(a;b;a),(a;a;b),(b;a;a)$ & $P4_12_12$ (92) \\
                                    & $X2$& $(a;b;a),(a;a;b),(b;a;a)$ & $P4_32_12$ (96) \\
                                    & $X1$& $(a;a;a),(a;a;-a),(a;-a;a),(-a;-a;-a)$ & $P4_132$ (213) \\
                                    & $X2$& $(a;a;a),(a;a;-a),(a;-a;a),(-a;-a;-a)$ & $P4_332$ (212) \\ \cline{2-4}
                                    & $X3$& $(a;0;0),(0;a;0),(0;0;a)$ & $P4_122$ (91) \\
                                    & $X4$& $(a;0;0),(0;a;0),(0;0;a)$ & $P4_322$ (95) \\
                                    & $X3$& $(a;b;a),(a;a;b),(b;a;a)$ & $P4_12_12$ (92) \\
                                    & $X4$& $(a;b;a),(a;a;b),(b;a;a)$ & $P4_32_12$ (96) \\
                                    & $X3$& $(a;a;a),(a;a;-a),(a;-a;a),(-a;-a;-a)$ & $P4_132$ (213) \\
                                    & $X4$& $(a;a;a),(a;a;-a),(a;-a;a),(-a;-a;-a)$ & $P4_332$ (212) \\
\hline
    \multirow{17}{*}{$F4_132$ (210)} & $X1$& $(a;0;0)$ & $P4_12_12$ (92) \\
                                    & $X4$& $(a;0;0)$ & $P4_32_12$ (96) \\
                                    & $X4$& $(a;a;0)$ & $P4_12_12$ (92) \\
                                    & $X1$& $(a;a;0)$ & $P4_32_12$ (96) \\
                                    & $X3$& $(a;0;0)$ & $P4_122$ (91) \\
                                    & $X2$& $(a;0;0)$ & $P4_322$ (95) \\
                                    & $X3$& $(a;b;a)$ & $P4_12_12$ (92) \\
                                    & $X2$& $(a;b;a)$ & $P4_32_12$ (96) \\
                                    & $X3$& $(a;a;a)$ & $P4_132$ (213) \\
                                    & $X2$& $(a;a;a)$ & $P4_332$ (212) \\
                                    & $X5$& $(a;0;0,a;0,0)$ & $P4_122$ (91) \\
                                    & $X5$& $(a,0;0,0;a,0)$ & $P4_322$ (95) \\
                                    & $W1$&  & $P4_1$ (76)/$P4_3$  (78)\\
                                    & $W1$& $(a,0,0,0;0,0,0,0;0,a,0,0)$ & $P4_122$ (91) \\
                                    & $W1$& $(a,0,0,0;0,a,0,0;0,0,0,0)$ & $P4_322$ (95) \\
                                    & $W1$& $(a,0,a,0;0,a,0,a;0,0,0,0)$ & $P4_12_12$ (92)\\
                                    & $W1$& $(a,0,a,0;0,0,0,0;0,a,0,a)$ & $P4_32_12$ (96)\\
\hline
    \multirow{10}{*}{$I4_132$ (214)} & $H1$& $(a)$ & $P4_132$ (213) \\
                                    & $H2$& $(a)$ & $P4_332$ (212) \\
                                    & $H3$& $(a,0)$ & $P4_12_12$ (92) \\
                                    & $H3$& $(a,\cot{(\frac{\pi}{3})}a)$ & $P4_32_12$ (96) \\
                                    & $H4$& $(a;0;0)$ & $P4_122$ (91) \\
                                    & $H5$& $(a;0;0)$ & $P4_322$ (95) \\
                                    & $N1$& $(a;a;0;0;0;0)$ & $P4_122$ (91) \\
                                    & $N3$& $(a;a;0;0;0;0)$ & $P4_322$ (95) \\
                                    & $N2$& $(a;a;0;0;0;0)$ & $P4_12_12$ (92) \\
                                    & $N4$& $(a;a;0;0;0;0)$ & $P4_32_12$ (96) \\
\hline
    \end{tabular}
    \caption{Sohncke non-enantiomorphic cubic space groups (first column) that hold chiral phonons (second column) capable of generating an enantiomorphic subgroup (fourth column). The third column lists the domain direction that is associated with the enantiomorphic subgroup of column four.  }
\end{table*}

\begin{table*}
    \begin{tabular}{c|ccc}
    \hline
     high sym. sohncke sg.   & IRREP & domain direction & enantiomorphic sg. \\
     \hline 

\hline
    \multirow{4}{*}{$P23$ (195)}    & $LD3LE3$& $(a,b;0,0;0,0;0,0)$ & $P3_1$ (144) \\
                                    & $LD2LE2$& $(a,b;0,0;0,0;0,0)$ & $P3_2$ (145) \\
                                    & $SM1$&  & $P3_1$ (144)/$P3_2$ (145) \\
                                    & $S1$&  & $P3_1$ (144)/$P3_2$ (145) \\
\hline
    \multirow{4}{*}{$F23$ (196)}    & $LD3LE3$& $(a,b;0,0;0,0;0,0)$ & $P3_1$ (144) \\
                                    & $LD2LE2$& $(a,b;0,0;0,0;0,0)$ & $P3_2$ (145) \\
                                    & $SM1$&  & $P3_1$ (144)/$P3_2$ (145) \\
                                    & $Q1QA1$&  & $P3_1$ (144)/$P3_2$ (145) \\
\hline
    \multirow{3}{*}{$I23$ (197)}    & $LD3LE3$& $(a,b;0,0;0,0;0,0)$ & $P3_1$ (144) \\
    \multirow{3}{*}{$I2_13$ (199)}  & $LD2LE2$& $(a,b;0,0;0,0;0,0)$ & $P3_2$ (145) \\
                                    & $SM1$&  & $P3_1$ (144)/$P3_2$ (145) \\
                                    & $G1$&  & $P3_1$ (144)/$P3_2$ (145) \\
\hline
    \multirow{4}{*}{$P2_13$ (198)}  & $LD3LE3$& $(a,b;0,0;0,0;0,0)$ & $P3_1$ (144) \\
                                    & $LD2LE2$& $(a,b;0,0;0,0;0,0)$ & $3_2$ (145) \\
                                    & $SM1$&  & $P3_1$ (144)/$P3_2$ (145) \\
                                    & $S1S1$&  & $P3_1$ (144)/$P3_2$ (145) \\
\hline
    \multirow{20}{*}{$P432$ (207)}  & $LD3$& $(a,b;0,0;0,0;0,0)$ & $P3_1$ (144) \\
  \multirow{20}{*}{\textit{Part 1}} & $LD2$& $(a,b;0,0;0,0;0,0)$ & $P3_2$ (145) \\
                                    & $LD3$& $(a,0;0,0;0,0;0,0)$ & $P3_121$ (152) \\
                                    & $LD2$& $(a,0;0,0;0,0;0,0)$ & $P3_221$ (154) \\
                                    & $SM1$&  & $P3_112$ (151)/$P3_212$ (153) \\
                                    & $SM2$&  & $P3_1$ (144)/$P4_3$ (145) \\ \cline{2-4}
                                    & $DT4$ & $(a,0;a,0;a,0),(a,a;a,a;a,a)$ & $P4_132$ (213) \\
                                    & $DT3$ & $(a,0;a,0;a,0),(a,a;a,a;a,a)$ & $P4_332$ (212) \\
                                    & $DT4$ & $(a,0;0,0;0,0),(a,a;0,0;0,0)$ & $P4_122$ (91) \\
                                    & $DT3$ & $(a,0;0,0;0,0),(a,a;0,0;0,0)$ & $P4_322$ (95) \\
                                    & $DT4$ & $(a,a;a,a;b,b),(a,a;a,a;b,0)$ & $P4_12_12$ (92) \\
                                    & $DT3$ & $(a,a;a,a;b,b),(a,a;a,a;b,0)$ & $P4_32_12$ (96) \\
                                    & $DT4$ & $(a,a;a,a;b,c),(a,b;c,0;c,0)$ & $P4_1$ (76) \\
                                    & $DT3$ & $(a,a;a,a;b,c),(a,b;c,0;c,0)$ & $P4_3$ (78) \\ \cline{2-4}
                                    & $Z1$ & $(a,a;a,a;a,a;-a,a;-a,a;-a,a)$ & $P4_132$ (213) \\
                                    & $Z1$ & $(a,-a;a,-a;a,a;a,-a;a,a;a,a)$ & $P4_332$ (212) \\
                                    & $Z1$ & $(a,a;0,0;0,b;b,0;0,0;a,a)$ & $P4_122$ (91) \\
                                    & $Z1$ & $(a,0;b,b;0,0;b,b;0,a;0,0)$ & $P4_322$ (95) \\
                                    & $Z1$ & $(a,a;0,b;0,0;a,a;0,0;b,0)$ & $P4_12_12$ (92) \\
                                    & $Z1$ & $(a,a;b,0;0,0;a,a;0,0;0,b)$ & $P4_32_12$ (96) \\
                                    & $Z1$ & $(a,-a;b,c;d,d;a,-a;d,d;c,-b)$ & $P4_1$ (76) \\
                                    & $Z1$ & $(a,-a;b,c;d,d;a,-a;d,d;-c,b)$ & $P4_3$ (78) \\ \cline{2-4}
\hline
    \end{tabular}
    \caption{Sohncke non-enantiomorphic cubic space groups (first column) that hold chiral phonons (second column) capable of generating an enantiomorphic subgroup (fourth column) on high-symmetry lines. The third column lists the domain direction that is associated with the enantiomorphic subgroup of column four.  }
\end{table*}

\begin{table*}
    \begin{tabular}{c|ccc}
    \hline
     high sym. sohncke sg.   & IRREP & domain direction & enantiomorphic sg. \\
     \hline 

\hline
    \multirow{22}{*}{$P432$ (207)}  & $Z2$ & $(a,0;a,0;a,0;0,-a;0,a;0,a)$ & $P4_132$ (213) \\
  \multirow{22}{*}{\textit{Part 2}} & $Z2$ & $(a,0;a,0;0,-a;a,0;0,a;0,a)$ & $P4_332$ (212) \\
                                    & $Z2$ & $(a,a;b,0;0,0;b,0;a,-a;0,0)$ & $P4_122$ (91) \\
                                    & $Z2$ & $(a,-a;b,0;0,0;b,0;a,a;0,0)$ & $P4_322$ (95) \\
                                    & $Z2$ & $(a,0;b,b;0,0;a,0;0,0;b,-b)$ & $P4_12_12$ (92) \\
                                    & $Z2$ & $(a,0;b,-b;0,0;a,0;0,0;b,b)$ & $P4_32_12$ (96) \\
                                    & $Z2$ & $(a,0;b,0;c,d;d,-c;0,b;0,a)$ & $P4_1$ (76) \\
                                    & $Z2$ & $(a,0;b,0;c,d;-d,c;0,b;0,a)$ & $P4_3$ (78) \\ \cline{2-4}
                                    & $T4$ & $(a,a;a,a;a,a)$ & $P4_132$ (213) \\
                                    & $T3$ & $(a,a;a,a;a,a)$ & $P4_332$ (212) \\
                                    & $T3$ & $(a,0;a,0;a,0)$ & $P4_132$ (213) \\
                                    & $T4$ & $(a,0;a,0;a,0)$ & $P4_332$ (212) \\
                                    & $T3$ & $(a,a;b,0;b,0)$ & $P4_122$ (91) \\
                                    & $T4$ & $(a,a;b,0;b,0)$ & $P4_322$ (95) \\
                                    & $T4$ & $(a,a;a,a;b,0)$ & $P4_122$ (91) \\
                                    & $T3$ & $(a,a;a,a;b,0)$ & $P4_322$ (95) \\
                                    & $T3$ & $(a,0;b,0;b,0)$ & $P4_12_12$ (92) \\
                                    & $T4$ & $(a,0;b,0;b,0)$ & $P4_32_12$ (96) \\
                                    & $T4$ & $(a,a;b,b;b,b)$ & $P4_12_12$ (92) \\
                                    & $T3$ & $(a,a;b,b;b,b)$ & $P4_32_12$ (96) \\
                                    & $T3$ & $(a,b;c,0;c,0)$ & $P4_1$ (76) \\
                                    & $T4$ & $(a,b;c,0;c,0)$ & $P4_3$ (78) \\
                                    & $T4$ & $(a,b;c,c;c,c)$ & $P4_1$ (76) \\
                                    & $T3$ & $(a,b;c,c;c,c)$ & $P4_3$ (78) \\
\hline
    \multirow{13}{*}{$F432$ (209)}  & $LD3$& $(a,b;0,0;0,0;0,0)$ & $P3_1$ (144) \\
\multirow{13}{*}{\textit{Part 1}}   & $LD2$& $(a,b;0,0;0,0;0,0)$ & $P4_3$ (145) \\
                                    & $LD3$& $(a,0;0,0;0,0;0,0)$ & $P3_121$ (152) \\
                                    & $LD2$& $(a,0;0,0;0,0;0,0)$ & $P3_221$ (154) \\
                                    & $SM1$&  & $P3_112$ (151)/$P3_212$ (153) \\
                                    & $SM2$&  & $P3_1$ (144)/$P4_3$ (145) \\ \cline{2-4}
                                    & $DT4$ & $(a,0;a,0;a,0),(a,a;a,a;a,a)$ & $P4_132$ (213) \\
                                    & $DT3$ & $(a,0;a,0;a,0),(a,a;a,a;a,a)$ & $P4_332$ (212) \\
                                    & $DT4$ & $(a,0;0,0;0,0)$ & $P4_122$ (91) \\
                                    & $DT3$ & $(a,0;0,0;0,0)$ & $P4_322$ (95) \\
                                    & $DT4$ & $(a,0;a,0;b,0),(a,a;b,0;b,0)$ & $P4_12_12$ (92) \\
                                    & $DT3$ & $(a,0;a,0;b,0),(a,a;b,0;b,0)$ & $P4_32_12$ (96) \\
                                    & $DT4$ & $(a,b;0,0;0,0),(a,b;c,0;c,0)$ & $P4_1$ (76) \\
                                    & $DT3$ & $(a,b;0,0;0,0),(a,b;c,0;c,0)$ & $P4_3$ (78) \\ 
\hline
    \end{tabular}
    \caption{Table XVI continued.  }
\end{table*}

\begin{table*}
    \begin{tabular}{c|ccc}
    \hline
     high sym. sohncke sg.   & IRREP & domain direction & enantiomorphic sg. \\
     \hline 

\hline
    \multirow{12}{*}{$F432$ (209)}  & $Q1QA1,Q2QA2$&  & $P4_132$ (213)/$P4_332$ (212) \\
\multirow{12}{*}{\textit{Part 2}}   & $Q1QA1,Q2QA2$&  & $P3_112$ (151)/$P3_212$ (153) \\
                                    & $Q1QA1,Q2QA2$&  & $P4_122$ (91)/$P4_322$ (95) \\
                                    & $Q1QA1,Q2QA2$&  & $P4_12_12$ (92)/$P4_32_12$ (96) \\
                                    & $Q1QA1,Q2QA2$&  & $P4_1$ (76)/$P4_3$ (78) \\  \cline{2-4}
                                    & $V2$ & $(a,0;a,0;0,a;a,0;0,a;0,a)$ & $P4_132$ (213) \\
                                    & $V2$ & $(a,0;0,a;a,0;0,a;0,-a;a,0)$ & $P4_332$ (212) \\ 
                                    & $V2$ & $(a,0;0,b;0,c;a,0;0,c;b,0)$ & $P4_12_12$ (92) \\
                                    & $V2$ & $(a,0;b,0;0,c;a,0;0,c;0,b)$ & $P4_32_12$ (96) \\ 
                                    & $V2$ & $(a,b;c,0;0,d;c,0;b,-a;0,d)$ & $P4_1$ (76) \\
                                    & $V2$ & $(a,b;c,0;0,d;c,0;-b,a;0,d)$ & $P4_3$ (78) \\ \cline{2-4} 
                                    & $C1$&  & $P3_121$ (152)/$P3_221$ (154) \\
                                    & $C1$&  & $P4_1$ (76)/$P4_3$ (78) \\  \cline{2-4} 
\hline
    \multirow{23}{*}{$I432$ (211)}  & $LD3$& $(a,b;0,0;0,0;0,0)$ & $P3_1$ (144) \\
                                    & $LD2$& $(a,b;0,0;0,0;0,0)$ & $P4_3$ (145) \\
                                    & $LD3$& $(a,0;0,0;0,0;0,0)$ & $P3_121$ (152) \\
                                    & $LD2$& $(a,0;0,0;0,0;0,0)$ & $P3_221$ (154) \\
                                    & $SM1$&  & $P3_112$ (151)/$P3_212$ (153) \\
                                    & $SM2$&  & $P3_1$ (144)/$P4_3$ (145) \\ \cline{2-4}
                                    & $DT4$ & $(a,0;a,0;a,0)$ & $P4_132$ (213) \\
                                    & $DT3$ & $(a,0;a,0;a,0)$ & $P4_332$ (212) \\
                                    & $DT4$ & $(a,0;0,0;0,0)$ & $P4_122$ (91) \\
                                    & $DT3$ & $(a,0;0,0;0,0)$ & $P4_322$ (95) \\
                                    & $DT4$ & $(a,0;a,0;b,0),(a,a;b,0;b,0)$ & $P4_12_12$ (92) \\
                                    & $DT3$ & $(a,0;a,0;b,0),(a,a;b,0;b,0)$ & $P4_32_12$ (96) \\
                                    & $DT4$ & $(a,b;0,0;0,0),(a,b;c,0;c,0)$ & $P4_1$ (76) \\
                                    & $DT3$ & $(a,b;0,0;0,0),(a,b;c,0;c,0)$ & $P4_3$ (78) \\ \cline{2-4}
                                    & $D2$ & $(a,0;0,a;0,a;a,0;a,0;0,a)$ & $P4_132$ (213) \\
                                    & $D2$ & $(a,0;0,-a;a,0;0,a;0,a;a,0)$ & $P4_332$ (213) \\
                                    & $D2$ & $(a,b;-a,b;b,a;b,-a;0,c;c,0)$ & $P4_122$ (91) \\
                                    & $D2$ & $(a,b;-a,b;b,a;b,-a;c,0;0,c)$ & $P4_322$ (95) \\
                                    & $D2$ & $(a,b;b,a;b,a;a,b;c,0;0,c)$ & $P4_12_12$ (92) \\
                                    & $D2$ & $(a,b;b,a;b,a;a,b;0,c;c,0)$ & $P4_32_12$ (96) \\
                                    & $D2$ & $(a,b;-b,a;c,d;d,c;d,c;c,d)$ & $P4_1$ (76) \\
                                    & $D2$ & $(a,b;b,-a;c,d;d,c;d,c;c,d)$ & $P4_3$ (78) \\ \cline{2-4}
                                    & $G1, G2$&  & $P3_112$ (151)/$P3_212$ (153) \\
\hline
    \end{tabular}
    \caption{Table XVI continued.  }
\end{table*}

\clearpage
The construction of the previous tables was made through the use of the utilities available on the Bilbao Crystallographic Server~\cite{Aroyo-06,Aroyo-06.2} and the ISOTROPY software suite~\cite{Hatch-03}. 
\end{document}